\def\R{{\mathbb R}}
\def\N{{\mathbb N}}
\def\Sph{{\mathbb S}} 
\def\Sch{{\mathcal S}}
\def\F{\mathcal F}
\def\O{\mathcal O}
\def\virgp{\raise 2pt\hbox{,}}
\def\({\left(}
\def\){\right)}
\def\<{\left\langle}
\def\>{\right\rangle}
\def\le{\leqslant}
\def\ge{\geqslant}
\def\Tend#1#2{\mathop{\longrightarrow}\limits_{#1\rightarrow#2}}
\def\d{{\partial}}
\def\eps{\varepsilon}
\def\l{\lambda}
\def\om{\omega}
\DeclareMathOperator{\DIV}{\mathrm{div}}
\newcommand{\D}{d}
\def\cst{\frac{4\pi}{3}}
\theoremstyle{plain}
\newtheorem{theorem}{Theorem}[section]
\newtheorem{lemma}[theorem]{Lemma}
\newtheorem{corollary}[theorem]{Corollary}
\newtheorem{proposition}[theorem]{Proposition}
\theoremstyle{definition}
\newtheorem{definition}[theorem]{Definition}
\newtheorem{remark}[theorem]{Remark}
\newtheorem*{remark*}{Remark}
\numberwithin{equation}{section}
\begin{document}

\title[Dipolar Gross--Pitaevskii equation]{On the Gross--Pitaevskii
  equation for trapped dipolar quantum gases}     
\author[R. Carles]{R\'emi Carles}
\address[R. Carles]{CNRS \& Universit\'e Montpellier~2\\Math\'ematiques
\\CC~051\\Place Eug\`ene Bataillon\\34095
  Montpellier cedex 5\\ France}
\email{Remi.Carles@math.cnrs.fr}
\author[P. Markowich]{Peter A. Markowich}
\address[P. Markowich]{Department of Applied Mathematics and
  Theoretical Physics\\ 
  CMS, Wilberforce Road\\ Cambridge CB3 0WA\\ England}
\email{p.a.markowich@damtp.cam.ac.uk}
\author[C. Sparber]{Christof Sparber}
\address[C. Sparber]{Department of Applied Mathematics and Theoretical
  Physics\\ 
  CMS, Wilberforce Road\\ Cambridge CB3 0WA\\ England}
\email{c.sparber@damtp.cam.ac.uk}
\begin{abstract}
We study the time-dependent Gross--Pitaevskii equation describing
Bose--Einstein condensation of trapped dipolar quantum gases.  
Existence and uniqueness as well as the possible blow-up of solutions
are studied. Moreover, we discuss the problem of dimension-reduction
for this nonlinear and nonlocal Schr\"odinger equation.
\end{abstract}

\subjclass[2000]{35Q55, 35A05, 81Q99}
\keywords{Bose--Einstein condensates, Gross--Pitaevskii equation, dipole
  interaction, dimension reduction}

\thanks{This work has been supported by the KAUST Investigator Award of P. Markowich. R.C. is partially
  supported by the ANR project SCASEN. P.M. 
  acknowledges support from  the Royal Society through his ``Wolfson
Research Merit Award''. C. S. has been supported by the ``APART
grant'' of the Austrian Academy of Science.}
\maketitle

\section{Introduction}
\label{sec:intro}

The success of atomic Bose--Einstein condensation has stimulated great
interest in the properties of trapped quantum gases. 
Recent developments in the manipulation of such ultra-cold atoms have
paved the way towards Bose--Einstein condensation in  
atomic gases where dipole-dipole interactions between the particles
are important. In \cite{YiYou} Yi and You were the first to introduce  
a pseudo-potential appropriate to describe such systems in which
particles interact via short-range repulsive forces and  
long-range (partly attractive) dipolar forces.  
Describing the corresponding Bose--Einstein condensates within the
realm of the Gross--Pitaevskii (mean field) approximation, one is  
led to the following nonlinear Schr\"odinger equation for the
macroscopic wave function of the condensate 
 \cite{Goral, Ronen, SSZL, YiYou1, YiYou2}:
\begin{equation}
  \label{eq:GPE}
  i \hbar \d_t \psi + \frac{\hbar^2}{2 m}\Delta \psi = V(x)\psi + g  \lvert
  \psi\rvert^2 \psi + d^2 \(K\ast \lvert\psi\rvert^2\)\psi,\quad x \in \R^3, t >0,
\end{equation}
where $|g|= 4 \pi \hbar^2 N |a| /m$. Here, we denote by $N$ the total
number of particles within the condensate, whereas 
$m$  denotes the mass of an individual particle and 
$a$ its corresponding scattering length (which can be experimentally
tuned to be either positive or negative).  
The wave function $\psi$ is then normalized, such that $\| \psi \|_{L^2}^2 = 1$. 
The potential $V(x)$, for $x=(x_1,x_2,x_3)\in \R^3$, describes the electromagnetic trap for the
condensate and is usually chosen to be a harmonic confinement, i.e.
\begin{equation}
  \label{eq:potential}
  V(x)=\frac{m}{2} \left(\om_1^2 x_1^2+\om_2^2x_2^2+\om_3^2x_3^2\right) . 
\end{equation}
The real-valued constants $\omega_1, \omega_2, \omega_3$  then
represent the corresponding  trap frequency in each spatial direction. 
Finally, $d$ denotes the dipole moment (in Gaussian units) and 
\begin{equation}\label{dipolekernel}
  K(x) = \frac{1-3\cos^2\theta}{\lvert x\rvert^3},
\end{equation}
where $\theta$ stands for the angle between $x\in \R^3$ and the dipole
axis $n\in \R^3$, with $|n| =1$. In other words $\theta$ is defined via 
\begin{equation*}
\cos \theta= \frac{x\cdot n}{\lvert x\rvert}.
\end{equation*}
To our knowledge a rigorous mathematical study of \eqref{eq:GPE} has
not been given so far.  
Note that the interaction kernel \eqref{dipolekernel} is indeed highly
singular and it is therefore not clear \emph{a priori} if the
corresponding  convolution operator is well defined. 
In particular, not even the existence and uniqueness of solutions to
\eqref{eq:GPE} has been established yet, and it will be one of the main
tasks of this work to do so. Moreover, we shall be interested in the
mathematical problem of dimension reduction. 
Quasi two-dimensional (pancake shaped) or even quasi one-dimensional
(cigar shaped) Bose--Einstein condensates can be obtained experimentally by
appropriately tuning the trap-frequencies $\omega_1, \omega_2,
\omega_3$. The mathematical derivation of effective models in lower
dimensions via the corresponding scaling limits is therefore of great
practical importance. 

For the mathematical analysis it is more convenient to rescale
\eqref{eq:GPE} in dimensionless form (see e.g. \cite{BJM}), to arrive
at the following model 
\begin{equation}
  \label{eq:finalNLS}
  i\d_t \psi + \frac{1}{2}\Delta \psi =  V(x)\psi +\l_1 \lvert
  \psi\rvert^2 \psi + \l_2\(K\ast \lvert\psi\rvert^2\)\psi,\quad x\in \R^3,
\end{equation}
where $\lambda_1 = 4\pi a N / a_0$ and $\lambda_2 = d^2/ (\hbar
\omega_0 a^3_0)$. Here,  we denote by $a_0 = \sqrt{\hbar / m
  \omega_0}$ the ground state length of a harmonic oscillator
corresponding to $\omega_0 = \frac{1}{3}( \omega_1+ \omega_2 +
\omega_3)$. In the upcoming analysis $\l_1,\l_2$ will simply be
assumed to be two given, real-valued parameters.  

The paper is then organized as follows: We start by collecting several
properties of the interaction kernel $K$ in
Section~\ref{sec:kernel}. After that, we set up a local in time
existence theory in Section~\ref{sec:local}, before proving several
different global existence results in
Section~\ref{sec:global}. Section~\ref{sec:unstable} is devoted to the 
question of finite time blow-up of solutions and also includes an
additional global in time existence result for rather particular
circumstances. Finally, we shall study the dimensional reduction of
our model in Section~\ref{sec:reduction}.

\section{Some properties of the dipole kernel}
\label{sec:kernel}
To simplify notations, we shall from now on assume,  
without restriction of generality, that $n=(0,0,1)$. The dipole-interaction 
kernel $K$ then reads:
\begin{equation}
  \label{eq:kernel}
  K(x)= \frac{x_1^2+x_2^2-2x_3^2}{\lvert x\rvert^5}.
\end{equation}
Even though this kernel is highly singular (like $1/|x|^3$), it
defines a rather smooth operator. The technical reason for this is that the
average of $K$ vanishes on spheres.

\begin{lemma}\label{lem:CZ}
  The operator ${\mathcal K}: u\mapsto K\ast u $ can be extended as a
  continuous operator on 
  $L^p(\R^3)$ for all $1<p<\infty$.  
\end{lemma}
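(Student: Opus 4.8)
The plan is to recognize $\mathcal K$ as a classical Calderón--Zygmund singular integral operator of convolution type and to quote the standard $L^p$ theory; the one piece of genuine content is the cancellation property announced just above the statement. I would first make precise that $K\ast u$ is to be read as the principal value $\mathrm{p.v.}\int K(x-y)u(y)\,dy$ (equivalently, as a convolution of tempered distributions), which is well defined on the Schwartz class and agrees with the naive integral whenever the latter converges. The two structural facts to record are that $K$ is homogeneous of degree $-3=-\dim\R^3$, i.e. $K(x)=\Omega(x/|x|)/|x|^3$ with $\Omega(\om)=\om_1^2+\om_2^2-2\om_3^2$, and that $\Omega$ has vanishing mean on $\Sph^2$: by symmetry $\int_{\Sph^2}\om_j^2\,d\sigma=\tfrac13\int_{\Sph^2}|\om|^2\,d\sigma=\cst$ for each $j$, so $\int_{\Sph^2}\Omega\,d\sigma=\cst(1+1-2)=0$. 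Since $\Omega$ is in addition smooth (hence bounded) on the sphere, these are exactly the hypotheses of the Calderón--Zygmund theorem, which yields boundedness of $\mathcal K$ on $L^p(\R^3)$ for every $1<p<\infty$.

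If one prefers a more hands-on argument, it suffices to check the two ingredients that theorem reduces to: an $L^2$ bound and the Hörmander kernel condition. The Hörmander condition $\sup_{y}\int_{|x|>2|y|}|K(x-y)-K(x)|\,dx<\infty$ is immediate from the degree $-3$ homogeneity and the mean value inequality, which give $|K(x-y)-K(x)|\lesssim |y|\,|x|^{-4}$ on that region. For the $L^2$ bound I would compute the Fourier multiplier explicitly. Away from the origin one has the pointwise identity $K(x)=-\d_3^2\,|x|^{-1}$; upgrading this to tempered distributions requires keeping track of the Dirac mass at the origin, and using $\Delta|x|^{-1}=-4\pi\delta_0$ together with the rotational symmetry of the singular part one obtains $\mathrm{p.v.}\,K=-\d_3^2\,|x|^{-1}-\cst\,\delta_0$. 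Fourier transforming and using $\widehat{|x|^{-1}}=4\pi|\xi|^{-2}$ gives the multiplier $\widehat{\mathrm{p.v.}\,K}(\xi)=4\pi\big(\xi_3^2/|\xi|^2-\tfrac13\big)$, which is homogeneous of degree zero and bounded by $\tfrac{8\pi}{3}$; hence $\mathcal K$ is bounded on $L^2(\R^3)$. Equivalently this identifies $\mathcal K=-4\pi R_3^2-\cst\,\mathrm{Id}$ with $R_3$ the third Riesz transform, so the full $L^p$ statement can also be deduced directly from the $L^p$ boundedness of the Riesz transforms.

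The only step requiring care is the distributional identity $\mathrm{p.v.}\,K=-\d_3^2\,|x|^{-1}-\cst\,\delta_0$, in particular the precise coefficient $\cst$ of $\delta_0$; this is where the vanishing of the spherical average enters decisively, since it is exactly the cancellation that makes the principal value converge and that forces $\sum_j\d_j^2|x|^{-1}$ to reduce to $-4\pi\delta_0$ with no leftover principal-value term, pinning each diagonal coefficient down to $-\cst$. Everything else is routine bookkeeping with standard harmonic-analytic tools and may simply be cited from the literature.
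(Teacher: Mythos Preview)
Your core argument is correct and is essentially the paper's own proof: write $K(x)=\Omega(x/|x|)/|x|^3$ with $\Omega$ smooth on $\Sph^2$ and of zero mean, then invoke the classical Calder\'on--Zygmund theorem. Your supplementary route via the distributional identity $\mathrm{p.v.}\,K=-\d_3^2|x|^{-1}-\cst\delta_0$ and the Riesz-transform identification is not used by the paper for this lemma, but it incidentally yields the formula $\widehat K(\xi)=\cst\big(3\xi_3^2/|\xi|^2-1\big)$ that the paper derives separately (Lemma~\ref{lem:FourierK}) by a longer spherical-harmonic expansion; so your alternative buys a shorter proof of that next result as a byproduct.
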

\begin{proof}
  We notice that $\Omega:x\mapsto 1-3\cos^2 \theta= 1-3\frac{x_3^2}{|x|^2}$
  is Lipschitzean on 
  ${\mathbb S}^2$, 
  homogeneous of degree 
  zero, and 
  \begin{equation*}
    K(x) = \frac{\Omega(x)}{\lvert x\rvert^3}
  \end{equation*}
for $x\in \R^3$. We also note that the average of $\Omega$ on spheres
vanishes:
\begin{equation*}
  \int_{{\mathbb S}^2}\Omega =0.
\end{equation*}
The lemma then follows from the Calder\'on--Zygmund Theorem (see
e.g. \cite{Stein}). 
\end{proof}
\begin{remark} The fact that the average of $K$ vanishes on $\mathbb
  S^2$ implies that the dipole-nonlinearity vanishes when applied to
  radially symmetric wave functions $\psi = \psi(|x|)$. In this case
  \eqref{eq:finalNLS} simplifies to the classical (cubic nonlinear)
  Gross--Pitaevskii equation. 
\end{remark}
The continuity at the $L^2$-level can also be seen by computing the Fourier 
transform of $K$, which turns out to be essentially bounded, i.e. $\widehat K\in
L^\infty(\R^3)$. In the following we shall use the explicit formula for this
Fourier transform several times, see also \cite{Goral,Ronen}. 
\begin{lemma}\label{lem:FourierK}
  Define the Fourier transform on the Schwartz space as
  \begin{equation*}
    \F u(\xi)\equiv \widehat u(\xi) =
    \int_{\R^3}e^{-ix\cdot \xi}u(x)\D x,\quad u\in
    \Sch(\R^3). 
  \end{equation*}
Then the Fourier transform of $K$ is given by 
\begin{equation*}
  \widehat K(\xi)= \frac{4\pi}{3}
  \(3\cos^2\Theta-1\)=\frac{4\pi}{3} \(3\frac{\xi_3^2}{\lvert
  \xi\rvert^2}-1\)  =
  \frac{4\pi}{3} \( \frac{2\xi_3^2-\xi_1^2-\xi_2^2}{\lvert
  \xi\rvert^2}\),  
\end{equation*}
where $\Theta$ stands for the angle between $\xi$ and the dipole
axis $n=(0,0,1)$. 
\end{lemma}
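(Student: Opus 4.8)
The plan is to compute $\widehat K$ by realizing $K$ as a tempered distribution and exploiting its structure as a second derivative of the Newtonian potential. Indeed, observe that
\begin{equation*}
  K(x) = \frac{x_1^2+x_2^2-2x_3^2}{|x|^5} = -\(\d_{x_1}^2+\d_{x_2}^2-2\d_{x_3}^2\)\frac{1}{|x|} \cdot \frac{1}{?}
\end{equation*}
up to a constant; more precisely, since $\d_{x_j}\d_{x_k}\frac{1}{|x|} = \frac{3x_jx_k - |x|^2\delta_{jk}}{|x|^5}$ away from the origin, one checks that
\begin{equation*}
  \(2\d_{x_3}^2 - \d_{x_1}^2 - \d_{x_2}^2\)\frac{1}{|x|} = \frac{3(2x_3^2 - x_1^2 - x_2^2) - |x|^2(2-1-1)}{|x|^5} = 3K(x)
\end{equation*}
pointwise on $\R^3\setminus\{0\}$. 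The first step is therefore to make this identity precise at the level of distributions: because $K$ is homogeneous of degree $-3$ with vanishing mean on $\Sph^2$ (Lemma \ref{lem:CZ}), it defines a tempered distribution via principal value, and the differential operator $2\d_{x_3}^2-\d_{x_1}^2-\d_{x_2}^2$ annihilates the radial function $1/|x|$ in the sense that no Dirac mass at the origin is produced (the Laplacian-type correction terms cancel, since $\Delta\frac1{|x|} = -4\pi\delta$ contributes only through $\d_{x_1}^2+\d_{x_2}^2+\d_{x_3}^2$, whereas our operator is $2\d_{x_3}^2 - \d_{x_1}^2 - \d_{x_2}^2 = 3\d_{x_3}^2 - \Delta$, and one must track the $3\d_{x_3}^2$ piece carefully). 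I would verify against a test function that indeed $3K = (2\d_{x_3}^2 - \d_{x_1}^2 - \d_{x_2}^2)\frac1{|x|}$ in $\mathcal S'(\R^3)$, possibly picking up a constant multiple of $\delta$ which only shifts $\widehat K$ by a constant and can be fixed afterwards.

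The second step is then routine: using $\widehat{1/|x|}(\xi) = 4\pi/|\xi|^2$ in $\R^3$ and the fact that the Fourier transform turns $\d_{x_j}$ into multiplication by $i\xi_j$, we get
\begin{equation*}
  \widehat{K}(\xi) = \frac{1}{3}\(-2\xi_3^2 + \xi_1^2 + \xi_2^2\)\cdot\frac{4\pi}{|\xi|^2} + c
\end{equation*}
for some constant $c$ arising from a possible Dirac contribution at the origin. The third step is to pin down $c$: one clean way is to note that $\widehat K$ must be homogeneous of degree $0$ (as $K$ is homogeneous of degree $-3$ in $\R^3$), which forces $c=0$ unless the Dirac term is present; alternatively, integrate $K$ against a Gaussian $e^{-|x|^2/2}$ and compare with the claimed formula integrated against its Fourier transform. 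Either computation yields
\begin{equation*}
  \widehat K(\xi) = \frac{4\pi}{3}\(\frac{2\xi_3^2 - \xi_1^2 - \xi_2^2}{|\xi|^2}\),
\end{equation*}
which is the asserted identity, and the rewriting in terms of $\cos^2\Theta = \xi_3^2/|\xi|^2$ is immediate.

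The main obstacle is the distributional bookkeeping in the first step: since $1/|x|$ is locally integrable but its second derivatives are not, the identity $3K = (2\d_{x_3}^2-\d_{x_1}^2-\d_{x_2}^2)\frac1{|x|}$ holds only up to a possible multiple of $\delta$, and one genuinely has to compute that multiple (equivalently, the $\mathrm{p.v.}$ regularization must be handled with care). The cleanest route is probably to avoid the issue by computing $\widehat K$ directly from the dilation and rotation symmetries: $\widehat K$ is homogeneous of degree $0$, hence a function of $\xi/|\xi|\in\Sph^2$ only, and by the axial symmetry of $K$ about the $x_3$-axis it depends only on $\Theta$; expanding $\Omega$ and $\widehat K$ in spherical harmonics, the Bochner--Hecke formula (the Fourier transform diagonalizes on each space of solid harmonics) shows that $\widehat K$ is again proportional to the degree-two zonal harmonic $3\cos^2\Theta - 1$, and the single remaining proportionality constant $4\pi/3$ is fixed by testing against one convenient function. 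I would present whichever of these two arguments is shorter, but flag the $\delta$-term computation as the one place where care is essential.
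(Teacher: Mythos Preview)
Your primary approach---writing $K$ as a constant multiple of $(2\d_{x_3}^2 - \d_{x_1}^2 - \d_{x_2}^2)\frac{1}{|x|}$ and transferring the known Fourier transform of the Newtonian potential---is correct in spirit and genuinely different from the paper's argument. Two small corrections are needed. First, the pointwise identity is $(2\d_{x_3}^2 - \d_{x_1}^2 - \d_{x_2}^2)\frac{1}{|x|} = -3K(x)$; you dropped a sign, which then propagates to your displayed formula for $\widehat K$ (your intermediate expression has the wrong sign, though your final answer is stated correctly). Second, your homogeneity argument cannot determine the constant $c$, since constants are themselves homogeneous of degree zero. The clean way to see that $c=0$ is precisely what you hint at: the operator $2\d_{x_3}^2 - \d_{x_1}^2 - \d_{x_2}^2$ is \emph{traceless}, so in the distributional identity $\d_{x_j}\d_{x_k}\frac{1}{|x|} = \mathrm{p.v.}\frac{3x_jx_k - |x|^2\delta_{jk}}{|x|^5} - \frac{4\pi}{3}\delta_{jk}\,\delta_0$ the Dirac contributions cancel exactly (the coefficients $2-1-1$ sum to zero). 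Once this is said, no further test against a Gaussian is needed, and your route is actually shorter than the paper's.

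The paper instead computes $\widehat K$ directly by expanding the plane wave $e^{-ix\cdot\xi}$ in spherical harmonics via the Rayleigh formula $e^{-ix\cdot\xi}=\sum_m (-i)^m(2m+1)j_m(|x||\xi|)P_m(\cos\phi)$, recognizes $K$ as a multiple of $Y_{20}(\omega)/|x|^3$, uses orthonormality of the $Y_{\ell m}$ on $\Sph^2$ to isolate the $m=2$ term, and then evaluates the radial integral $\int_0^\infty j_2(r)\,\D r/r = 1/3$ explicitly from the recursion for spherical Bessel functions. This is essentially the Bochner--Hecke computation you mention as your alternative, carried out by hand. The paper's method has the advantage of being entirely self-contained, with no distributional subtleties at the origin; your method has the advantage of avoiding special functions altogether and reducing everything to the single classical fact $\widehat{1/|x|} = 4\pi/|\xi|^2$.
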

\begin{proof}
  We use the decomposition of $e^{-ix\cdot \xi}$ into spherical harmonics
  \begin{equation*}
    e^{-ix\cdot \xi} = \sum_{m=0}^\infty (-i)^m (2m+1)j_m
    \(\lvert x\rvert \lvert \xi\rvert \) P_m\(\cos \phi\),
  \end{equation*}
where $j_m$ and $P_m$ stand for the spherical Bessel function and
Legendre polynomial of order $m$, respectively, and $\phi$ denotes the angle
between $x$ and $\xi$, i.e. $x\cdot \xi = \lvert x\rvert \lvert \xi\rvert
\cos \phi$ (see e.g. \cite{Handbook}). Denoting by $\om$ and $\om'$
the corresponding spherical  
coordinates of $x$ and $\xi$, respectively, we shall further expand $P_m$ via 
\begin{equation*}
  P_m(\cos \phi) = \frac{4\pi}{2m+1} \sum_{\ell =-m}^m Y^*_{\ell m}
  \(\om\) Y_{\ell m}(\om'),
\end{equation*}
where $Y_{\ell m}$ are the spherical harmonics. Note that, since
\begin{equation*}
  P_2(x)=\frac{1}{2}\(3x^2 -1\),
\end{equation*}
we have the following identity
\begin{equation*}
  K(x)= -2\, \frac{P_2(\cos \theta)}{|x|^3}=
  -2\, \sqrt{\frac{4\pi}{5}}\frac{Y_{20}(\om)}{|x|^3},  
\end{equation*}
where we recall that $\theta$ is the angle between $x$ and the dipole
axis $n$.  

Putting all of this together we can compute the Fourier
transform of $K$ via  
\begin{align*}
  \widehat K(\xi)\ = &\int_{\R^3}\sum_{m=0}^\infty
    (-i)^m (2m+1)j_m 
    \(\lvert x\rvert \lvert \xi\rvert \)P_m\(\cos \phi\)K(x)\, \D x\\
=& -2\sqrt{\frac{4\pi}{5}}\sum_{m=0}^\infty (-i)^m
    (2m+1)\int_{0}^\infty j_m(r|\xi|)\frac{1}{r^3} r^2
    \D r\times\\
& \int_{\om\in \Sph^2} P_m\(\cos \phi\)Y_{20}(\om)\, \D S(\om)\\
=& -2\sqrt{\frac{4\pi}{5}}\sum_{m=0}^\infty (-i)^m
    (2m+1)\int_{0}^\infty j_m(r|\xi|)\, \frac{\D r}{r} \times\\
&\frac{4\pi}{2m+1}\int_{\om\in \Sph^2}  \sum_{\ell =-m}^m Y^*_{\ell m}
  \(\om\) Y_{\ell m}(\om') Y_{20}(\om)\, \D S(\om)\\
=&-2\sqrt{\frac{4\pi}{5}} (-i)^2\times
    5\int_{0}^\infty j_2(r|\xi|)\, \frac{\D r}{r} \times
    \frac{4\pi}{5}Y_{20}(\om') , 
\end{align*}
where in the last equality we have used the fact that the $\{Y_{\ell
  m}\}$ form an orthonormal basis of $L^2(\Sph^2)$. We thus have, for
  $\xi\not =0$,  
\begin{align*}
 \widehat K(\xi)& =8\pi \sqrt{\frac{4\pi}{5}} Y_{20}(\om')
 \int_{0}^\infty j_2(r|\xi|)\frac{\D r}{r} 
=8\pi P_2\(\cos\Theta\) \int_{0}^\infty j_2(r)\frac{\D r}{r}  \\
&= 4\pi  \(3\cos^2\Theta
 -1\)\int_{0}^\infty j_2(r)\frac{\D r}{r}. 
\end{align*}
It remains to compute the radial integral in this formula. In view of
the identity 
\begin{equation*}
  \frac{\D}{\D r}\(\frac{j_1(r)}{r}\) = -\frac{j_{2}(r)}{r},
\end{equation*}
we have
\begin{equation*}
 \int_{0}^\infty j_2(r)\frac{\D r}{r} =\lim_{R\to 0}\int_{R}^\infty
 j_2(r)\frac{\D r}{r}= \lim_{R\to 0} \frac{j_1(R)}{R}.  
\end{equation*}
Recalling that 
\begin{equation*}
  j_1(R)= \frac{\sin R}{R^2}-\frac{\cos R}{R}, 
\end{equation*}
we thus find $\int_0^\infty j_2(r)/r \,  \D r = 1/3$, which finishes
the proof.  
\end{proof}

\section{Local in time existence}
\label{sec:local}

In view of the forthcoming dimension reduction analysis in Section
\ref{sec:reduction}, we allow the spatial dimension  
to be smaller than three and consider, instead of \eqref{eq:finalNLS},
the following initial value problem 
\begin{equation}\label{eq:NLSgen}
  \left \{
  \begin{split}
  i\d_t \psi + \frac{1}{2}\Delta \psi = & \, V_d(x)\psi +\l_1 \lvert
  \psi\rvert^2 \psi + \l_2\(K_d\ast \lvert\psi\rvert^2\)\psi,\quad
  x\in \R^d,\\ 
  \psi \big |_{t = 0} = & \, \varphi(x),
  \end{split}
  \right.
\end{equation}
where $1\le d\le 3 $.  We assume $\l_1,\l_2 \in \R$ and the potential
$V_d$ to be quadratic in $d$ dimensions, i.e. 
\begin{equation}\label{eq:potgen}
  V_d(x)= \frac{1}{2}\sum_{j=1}^d \om_j^2x_j^2, \quad \omega_j \in \R.
\end{equation}
The interaction kernel $K_d$ will not be specified in detail (except
that $K_3 \equiv K$). Rather, we shall only assume that  
the operator ${\mathcal K}_d:  u\mapsto K_d\ast u$ is bounded on
$L^2(\R^d)$, a property already known  
to be satisfied in $d=3$, see Lemma~\ref{lem:CZ}.  In $d=1,2$ we shall 
check that this holds true after we succeeded in 
deriving the precise expressions of $K_d$ via dimension reduction.

The following two important physical quantities are formally conserved
by the time-evolution associated to \eqref{eq:NLSgen}:  
\begin{align}
  \text{Mass: } M=&\lVert \psi(t)\rVert_{L^2}^2. \label{eq:mass} \\
\text{Energy: } E=&\frac{1}{2}\lVert\nabla \psi(t)\rVert_{L^2}^2
+\int_{\R^d}V_d(x) \lvert \psi(t,x)\rvert^2\D x + \frac{\l_1}{2}\lVert
\psi(t)\rVert_{L^4}^4  \label{eq:energy} \\
&+ \frac{\l_2}{2} \int_{\R^d} \(K_d\ast
\lvert\psi\rvert^2\)(t,x) \lvert \psi(t,x)\rvert^2\D x. \nonumber
\end{align}
Note that for the conservation of energy
(which can be derived formally by multiplying \eqref{eq:NLSgen} by
$\d_t \overline \psi$, taking real parts and integrating in $x$),
we use the fact that $K$ is even.
These two quantities naturally lead to the introduction of an energy
space associated with the linear case $\l_1=\l_2=0$:
\begin{equation*}
  \Sigma = \left\{ u \in L^2(\R^d)\ ;\quad \lVert u \rVert_\Sigma^2
  := \lVert u \rVert_{L^2}^2+ \lVert \nabla u \rVert_{L^2}^2 + \lVert
  x u \rVert_{L^2}^2 <\infty\right\}.
\end{equation*}
Let us ignore the dipole nonlinearity for the moment:
$\l_2=0$. $\Sigma$ is a natural space to study \eqref{eq:NLSgen},
where several results concerning the Cauchy problem are available
(see e.g. \cite{CazCourant}). Moreover, it is known that working only
in $L^2(\R^3)$ is  
not sufficient to prove (local) well-posedness; see e.g. \cite{CaInstab}
and references therein. 
Also note that the Cauchy problem is
ill-posed in $H^1(\R^d)$, due to the rotation of phase space induced
by the harmonic oscillator (see \cite{CaCauchy} for both linear and nonlinear
cases).  However, including the dipole-nonlinearity in this setting is not
completely straightforward as we can not  
allow for any (weak) derivative to be put on $K_d$, since the
corresponding convolution operator is  
no longer well defined. The only possible way to circumvent this
problem seems to be the use of  
Strichartz estimates. To this end, we shall introduce in the following
subsection the main technical tools  
needed later on in the local existence proof.

\subsection{Technical preliminaries}\label{sec:prelim}

Let $1\le d\le 3$, and $V_d$ be quadratic as in
\eqref{eq:potgen}. Introduce the group 
\begin{equation*}
  U(t)=e^{-itH},\quad \text{where }H= -\frac{1}{2}\Delta +V_d,
\end{equation*}
which generates the time-evolution for the linear problem.
We first remark that in view of Mehler's formula (see
e.g. \cite{Feyn}), the group $U(\cdot)$ is not only bounded on
$L^2(\R^d)$, but also enjoys dispersive properties for small
time. More precisely it holds 
\begin{equation}
\label{eq:dispest}
  \lVert U(t)\varphi \rVert_{L^\infty(\R^d)}\le \frac{C}{|t|^{d/2}}\lVert
  \varphi \rVert_{L^1(\R^d)},\quad \text{for }|t|\le \delta, 
\end{equation}
for some $\delta>0$. Note that for an harmonic potential as in
\eqref{eq:potgen}, $\delta$ is 
necessarily finite, since $H$ has eigenvalues. 

\begin{remark}
This dispersive
estimate \eqref{eq:dispest}  turns out to be valid for external
potentials which are not 
exactly quadratic (nor even positive): it suffices to have $V_d\in
C^\infty(\R^d;\R)$ with 
$\d^\alpha V_d\in L^\infty(\R^d)$ as soon as $|\alpha|\ge 2$ for the
above estimate to remain valid. In that case, the proofs we present
below remain valid, see e.g. \cite{CaCCM} and references
therein. 
\end{remark}

An important consequence of the dispersive estimate
\eqref{eq:dispest}, and of the fact that $U(\cdot)$ is  
unitary on $L^2$, is the existence of Strichartz
estimates, which require the following definition of admissible index pairs.
\begin{definition}\label{def:adm}
 A pair $(q,r)$ is {\bf admissible} if $2\le r
  \le\frac{2d}{d-2}$ (resp. $2\le r\le \infty$ if $d=1$, $2\le r<
  \infty$ if $d=2$)  
  and 
$$\frac{2}{q}=\delta(r):= d\left( \frac{1}{2}-\frac{1}{r}\right).$$
\end{definition}

Following \cite{GV85,Yajima87,KT}, we have:

\begin{lemma}[Strichartz estimates]\label{lem:strichartz} 
Let $(q,r)$, $(q_1,r_1)$ and~$ (q_2,r_2)$ be admissible pairs. Let $I$
be some finite time interval containing the origin, $0\in I$. \\
$1.$ There exists $C_r=C(r,I)$, such that for any $\varphi \in
L^2(\R^d)$ it holds 
\begin{equation}\label{eq:strich}
    \left\| U(\cdot)\varphi \right\|_{L^q(I;L^r)}\le C_r \|\varphi
    \|_{L^2}.
  \end{equation}
$2.$ There exists $C_{r_1,r_2}=C(r_1,r_2,I)$, such that for any $F\in
L^{q'_2}(I;L^{r'_2})$ it holds
\begin{equation}\label{eq:strichnl}
      \left\| \int_{I\cap\{s\leq
      t\}} U(t-s)F(s)\D s 
      \right\|_{L^{q_1}(I;L^{r_1})}\le C_{r_1,r_2} \left\|
      F\right\|_{L^{q'_2}(I;L^{r'_2})} .
    \end{equation}
\end{lemma}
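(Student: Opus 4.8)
The plan is to deduce the Strichartz estimates from the two structural facts already at hand: $U(\cdot)$ is unitary on $L^2(\R^d)$, and it satisfies the \emph{local-in-time} dispersive bound \eqref{eq:dispest}. The only genuine difference with the free Schr\"odinger case is that \eqref{eq:dispest} holds merely for $|t|\le\delta$; since the time interval $I$ is finite, this is overcome by splitting $I$ into finitely many pieces and patching.

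First I would reduce to a local statement. Cover $I$ by consecutive subintervals $I_1,\dots,I_n$, each of length strictly less than $\delta$. For $t,s$ in the same $I_j$ one has $|t-s|<\delta$, so \eqref{eq:dispest} applies, and interpolating with the unitarity of $U(t-s)$ on $L^2$ gives $\|U(t-s)\|_{L^{r'}\to L^r}\lesssim |t-s|^{-\delta(r)}$ for every admissible $r$, with $\delta(r)$ as in Definition~\ref{def:adm} and $2/q=\delta(r)$. This is exactly the decay hypothesis required by the abstract machinery of Keel--Tao (and, away from the endpoint, of Ginibre--Velo via $TT^*$ together with the Christ--Kiselev lemma). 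Applied on each $I_j$ it yields the homogeneous bound $\|U(\cdot)\varphi\|_{L^q(I_j;L^r)}\le C\|\varphi\|_{L^2}$ and the retarded bound $\|\int_{I_j\cap\{s\le t\}}U(t-s)F(s)\,\D s\|_{L^{q_1}(I_j;L^{r_1})}\le C\|F\|_{L^{q_2'}(I_j;L^{r_2'})}$, with constants depending only on the exponents and on $\delta$.

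Next I would patch. For \eqref{eq:strich} one simply sums over the subintervals: $\|U(\cdot)\varphi\|_{L^q(I;L^r)}^q=\sum_j\|U(\cdot)\varphi\|_{L^q(I_j;L^r)}^q\le n\,C^q\|\varphi\|_{L^2}^q$, where the local bound is used on each $I_j$ (valid even when $0\notin I_j$, since $\|U(t)\varphi\|_{L^2}=\|\varphi\|_{L^2}$ lets one recenter). This gives \eqref{eq:strich} with $C_r=C(r,I)$ absorbing $n$. For \eqref{eq:strichnl}, fix $t\in I_j$ and decompose the retarded integral as $\int_{I_j\cap\{s\le t\}}U(t-s)F(s)\,\D s+\sum_{k<j}\int_{I_k}U(t-s)F(s)\,\D s$. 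The first term is controlled on $I_j$ by the local retarded estimate. Each cross term is \emph{non-retarded}: by the group law $U(t-s)=U(t)U(-s)$, write $\int_{I_k}U(t-s)F(s)\,\D s=U(t)g_k$ with $g_k:=\int_{I_k}U(-s)F(s)\,\D s$; the dual of the homogeneous estimate on $I_k$ gives $\|g_k\|_{L^2}\lesssim\|F\|_{L^{q_2'}(I_k;L^{r_2'})}$, and the homogeneous estimate on $I_j$ gives $\|U(\cdot)g_k\|_{L^{q_1}(I_j;L^{r_1})}\lesssim\|g_k\|_{L^2}$. Summing over $j$ and $k$ (a finite sum, handled by H\"older in the discrete index together with the $\ell^{q_1}$/$\ell^{q_2'}$ embeddings) produces \eqref{eq:strichnl}, again with constants depending only on the exponents and on $I$.

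I expect the main obstacle to be the endpoint pair $r=2d/(d-2)$ when $d=3$, where the $TT^*$/Christ--Kiselev route fails and one must invoke the bilinear interpolation argument of Keel--Tao. Since, however, the decay estimate $\|U(t-s)\|_{L^{r'}\to L^r}\lesssim|t-s|^{-\delta(r)}$ is available on each $I_j$ and $U$ is unitary on the Hilbert space $L^2$, their theorem applies verbatim on each subinterval, and the patching above is unaffected. The remaining points---interpolating \eqref{eq:dispest} against the $L^2$ bound, and tracking that all constants depend only on $d$, the admissible exponents, and $|I|$ through the number $n$ of subintervals---are routine, so I would only sketch them.
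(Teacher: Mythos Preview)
The paper does not actually prove this lemma: it simply states ``Following \cite{GV85,Yajima87,KT}, we have'' and then records the estimates, with no further argument. Your proposal, by contrast, supplies the standard proof that those references encode, adapted to the fact that the dispersive bound \eqref{eq:dispest} is only local in time: cover $I$ by finitely many subintervals of length $<\delta$, apply the abstract Keel--Tao/Ginibre--Velo machinery on each (where the required $L^{r'}\to L^r$ decay is available), and patch via the group law and duality. This is correct, and it is precisely the argument one finds spelled out in, e.g., the harmonic-oscillator Strichartz literature (in particular in the Yajima and Carles references the paper alludes to). So you are not deviating from the paper's approach so much as filling in what the paper leaves to citation; there is nothing to correct.
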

As underscored above, taking for $\varphi$ an eigenfunction of the
(anisotropic)  
harmonic oscillator  shows that in general the constants $C_r$ and
$C_{r_1,r_2}$  
do depend on the length of the time interval $I$, unless all the
$\om_j$'s in \eqref{eq:potgen} are zero. 
\smallbreak
In the following we shall use two special admissible index pairs to
apply the above lemma, namely   
\begin{equation}\label{eq:pairs}
  (q,r)=(\infty,2)\quad \text{and}\quad (q,r)=\(\frac{8}{d},4\). 
\end{equation}
The idea for the proofs of local in time existence and uniqueness is
then to apply a fixed point argument on the 
Duhamel's formula associated to \eqref{eq:NLSgen}. This step turns out
to follow exactly the same lines as in the proof of local existence
without dipole, $\l_2=0$.  
To this end Strichartz estimates based on \eqref{eq:pairs} will be
invoked several times.

\subsection{Constructing a local solution in the energy space}

With the above technical tools in hand we can now state the first main
result of this work. 

\label{sec:constr}
\begin{proposition}\label{prop:localSigma}
  Let $1\le d\le 3$, $V_d$ be quadratic,  $\l_1,\l_2\in
  \R$, and $\varphi\in\Sigma$. Assume that the operator 
  ${\mathcal K}_d: 
  u\mapsto K_d\ast u$ is bounded on $L^2(\R^d)$. Then there exists
  $T$, depending only on (upper bounds for)
  $\| \varphi \|_{\Sigma}$, such that \eqref{eq:NLSgen} has a unique
  solution $\psi\in X_T$, where
  \begin{equation*}
    X_T =\left\{ \psi \in C([0,T];\Sigma)\ ;\ \psi,\nabla
    \psi, x\psi \in C([0,T];L^2(\R^d))\cap L^{8/d}([0,T];L^4(\R^d))
    \right\}. 
  \end{equation*}
Moreover, the mass $M$ and the energy $E$, defined in \eqref{eq:mass},
\eqref{eq:energy},  are  conserved for $t \in [0,T]$.
\end{proposition}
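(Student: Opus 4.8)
The plan is to prove Proposition~\ref{prop:localSigma} by a fixed-point argument applied to Duhamel's formula
\begin{equation*}
  \psi(t) = \Phi(\psi)(t) := U(t)\varphi - i\int_0^t U(t-s)\Big( V_d\psi + \l_1|\psi|^2\psi + \l_2(K_d\ast|\psi|^2)\psi\Big)(s)\,\D s,
\end{equation*}
working in the complete metric space $X_T$ (equipped with a suitable norm built from the $L^\infty_tL^2_x$ and $L^{8/d}_tL^4_x$ norms of $\psi$, $\nabla\psi$ and $x\psi$), with $T>0$ small. First I would record how the spatial derivatives $\nabla$ and multiplication by $x$ interact with the flow: since $H=-\tfrac12\Delta+V_d$ has quadratic potential, the commutators $[\nabla,H]$ and $[x,H]$ are first-order, so $\nabla U(t)$ and $xU(t)$ can be controlled in the Strichartz spaces at the cost of Gronwall-type factors on the finite interval $I=[0,T]$; equivalently one commutes $\nabla$ and $x$ through Duhamel and closes the estimates for the three quantities $(\psi,\nabla\psi,x\psi)$ simultaneously. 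This is exactly the standard $\Sigma$-theory for NLS with harmonic potential (as in \cite{CazCourant}), so the only genuinely new point is the dipole term.

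The heart of the matter is therefore to estimate the nonlinear terms in the Strichartz-dual space $L^{q'}(I;L^{r'})$ with $(q,r)=(8/d,4)$, i.e.\ $L^{8/(8-d)}(I;L^{4/3})$, and to do the same after applying $\nabla$ or $x$. For the local nonlinearity $|\psi|^2\psi$ this is classical. For the nonlocal term, the key observation is that $K_d\ast$ is bounded on $L^2(\R^d)$ by hypothesis (and, for $d=3$, by Lemma~\ref{lem:CZ}); hence by H\"older, $\|(K_d\ast|\psi|^2)\psi\|_{L^{4/3}} \le \|K_d\ast|\psi|^2\|_{L^2}\,\|\psi\|_{L^4} \le C\,\||\psi|^2\|_{L^2}\,\|\psi\|_{L^4} = C\,\|\psi\|_{L^4}^3$, which is precisely the same structure as for $|\psi|^2\psi$. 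When a derivative falls on this term, $\nabla\big((K_d\ast|\psi|^2)\psi\big) = (K_d\ast|\psi|^2)\nabla\psi + \big(K_d\ast\nabla(|\psi|^2)\big)\psi$; the first summand is handled as above, and for the second we write $\nabla(|\psi|^2) = 2\RE(\bar\psi\nabla\psi)$ and again use $L^2$-boundedness of $K_d\ast$, getting $\|K_d\ast\nabla(|\psi|^2)\|_{L^2}\le C\|\bar\psi\nabla\psi\|_{L^2}\le C\|\psi\|_{L^4}\|\nabla\psi\|_{L^4}$, so the bound reduces to $\|\psi\|_{L^4}\|\nabla\psi\|_{L^4}\|\psi\|_{L^4}$, matching the cubic term's derivative estimate. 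The multiplication by $x$ is even easier since $x$ commutes with convolution up to lower-order terms: $x\big((K_d\ast|\psi|^2)\psi\big) = (K_d\ast|\psi|^2)(x\psi)$, because $x(K_d\ast f) - K_d\ast(xf)$ vanishes only if $K_d$ has zero first moment — so instead one simply keeps $x$ on the outer $\psi$ and never differentiates the kernel, which suffices. The harmonic potential term $V_d\psi$ is quadratic in $x$, hence controlled directly by the $x\psi$ and $\psi$ norms (with an extra $x$ on it producing terms in $x\psi$ and $\psi$).

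Assembling these estimates via \eqref{eq:strich}--\eqref{eq:strichnl} yields, on $I=[0,T]$, a bound of the form $\|\Phi(\psi)\|_{X_T} \le C\|\varphi\|_\Sigma + C(T)\,T^{\theta}\big(\|\psi\|_{X_T} + \|\psi\|_{X_T}^3\big)$ for some $\theta>0$ coming from the H\"older gain in time (the finite-interval Strichartz constants $C(T)$ are nondecreasing, so this is uniform for $T\le 1$), and similarly a contraction estimate $\|\Phi(\psi_1)-\Phi(\psi_2)\|_{X_T}\le C(T)T^\theta(1+\|\psi_1\|_{X_T}^2+\|\psi_2\|_{X_T}^2)\|\psi_1-\psi_2\|_{X_T}$. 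Choosing first a ball of radius $R\sim\|\varphi\|_\Sigma$ and then $T$ small (depending only on an upper bound for $\|\varphi\|_\Sigma$) makes $\Phi$ a contraction of that ball into itself, giving the unique solution $\psi\in X_T$; continuity in time with values in $\Sigma$ follows from the Strichartz/Duhamel representation in the usual way. Finally, conservation of mass and energy: one proves these first for smooth solutions (or by regularizing $\varphi$ and the kernel, using that $K_d$ is even for the energy identity as noted after \eqref{eq:energy}) by multiplying \eqref{eq:NLSgen} by $\bar\psi$ resp.\ $\d_t\bar\psi$, taking real parts and integrating, and then passes to the limit using continuity of all the relevant quantities (in particular $t\mapsto\int (K_d\ast|\psi|^2)|\psi|^2$ is continuous by the $L^2$-boundedness of $K_d\ast$ and $\psi\in C([0,T];\Sigma)\hookrightarrow C([0,T];L^4)$). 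The main obstacle to anticipate is purely bookkeeping: tracking the $\nabla$ and $x$ actions through the nonlocal term without ever differentiating $K_d$, and keeping the finite-interval Strichartz constants under control so that $T$ depends only on $\|\varphi\|_\Sigma$; the nonlocal term itself, thanks to the $L^2$-boundedness hypothesis, behaves no worse than the cubic term.
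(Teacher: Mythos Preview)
Your proposal correctly identifies the key new point: thanks to the $L^2$-boundedness of $\mathcal K_d$, the nonlocal term $(K_d\ast|\psi|^2)\psi$ and its gradient obey exactly the same H\"older/Strichartz estimates as the local cubic term, and you never need to differentiate $K_d$. This is precisely the paper's argument.

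There is, however, a genuine inconsistency in how you treat the potential. Your Duhamel formula carries $V_d\psi$ as a source term, which means $U(t)$ is the free Schr\"odinger group; yet immediately afterwards you invoke the commutators $[\nabla,H]$ and $[x,H]$ with $H=-\tfrac12\Delta+V_d$, which presupposes $U(t)=e^{-itH}$. These two choices are incompatible, and the first one does not close: if $V_d\psi$ is a source, then applying $x$ produces $xV_d\psi\sim |x|^3\psi$, and applying $\nabla$ produces (among other things) $V_d\nabla\psi\sim |x|^2\nabla\psi$. Neither quantity is controlled by the $\Sigma$-norm, so your sentence ``$V_d\psi$ is quadratic in $x$, hence controlled directly by the $x\psi$ and $\psi$ norms'' is false as stated. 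The paper's remedy---which you partly gesture at---is to absorb $V_d$ into the propagator from the outset: with $U(t)=e^{-itH}$ the Duhamel integrand contains only the two nonlinear terms, and commuting $\nabla$ and $x$ through $U$ produces the \emph{first-order} sources $(\nabla V_d)\Phi(\psi)\sim x\,\Phi(\psi)$ and $-\nabla\Phi(\psi)$, which do close in $\Sigma$.

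A smaller difference: the paper runs the contraction only in the weaker $L^{8/d}_TL^4$ metric (invoking Kato's remark that the closed ball in $X_T$ is complete for this topology), thereby avoiding difference estimates at the level of $\nabla\psi$ and $x\psi$. Your full-$X_T$ contraction is workable but heavier; the paper's route is shorter. Also, the paper makes the time gain explicit via the Sobolev embedding $\|\psi\|_{L^k_TL^4}\le CT^{1/k}\|\psi\|_{L^\infty_TH^1}$ with $k=8/(4-d)$, rather than leaving it as an unspecified $T^\theta$.
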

The space $L^{8/d}\([0,T];L^4\(\R^d\)\)$ is here to ensure
uniqueness. The appearance of this space will become
clear during the course of 
the proof. The main technical remark is that since ${\mathcal K}_d$ is
continuous on 
$L^2(\R^d)$ by assumption, the nonlocal term in \eqref{eq:NLSgen} can
be estimated like the local cubic nonlinearity.
\begin{proof} The proof of the conservations of mass and energy is omitted
here. We refer to \cite{CazCourant} for a general argument, easy to
adapt to the present case. Now, consider 
Duhamel's formula associated to \eqref{eq:NLSgen}  
\begin{equation*}
  \psi(t)=U(t) \varphi -i\l_1 \int_0^tU(t-s)\(|\psi|^2\psi\)(s)\D s
  -i\l_2\int_0^tU(t-s)\(\(K_d\ast |\psi|^2\)\psi\)(s)\D s , 
\end{equation*}
and denote $\Phi(\psi)(t)= U(t) \varphi +S_1(t)+S_2(t)$, with
\begin{align*}
  S_1(t)&= -i\l_1 \int_0^tU(t-s)\(|\psi|^2\psi\)(s)\D s,\\
  S_2(t)&= -i\l_2\int_0^tU(t-s)\(\(K_d\ast |\psi|^2\)\psi\)(s)\D s .
\end{align*}
Since at this stage, we are interested in a local result only, we may
assume that $T\in ]0,1]$. 
For $ \varphi\in \Sigma$, let $R=\| \varphi\|_{\Sigma}$, and introduce
\begin{align*}
  X_T(R)=\big\{ &\psi \in X_T\ ;\ \lVert A \psi
  \rVert_{L^\infty([0,T];L^2)}\le 4C_2R,\\
& \text{ and }  \lVert A \psi
  \rVert_{L^{8/d}([0,T];L^4)}\le 4C_4R,\quad \forall A\in \{{\rm Id},
  \nabla, x\}\big\}.  
\end{align*}
The constants $C_2$ and $C_4$ are those which appear in the first
point of Lemma~\ref{lem:strichartz}, with $I=[0,1]$. Apart from them we shall 
in the following denote all the irrelevant constants by $C$, whose value may
therefore change from one line to another. 

To prove
Proposition~\ref{prop:localSigma}, we first show that for $T\in ]0,1]$
sufficiently small, $\Phi$ leaves $X_T(R)$ invariant. Then, up to
demanding $T$ to be even smaller, we show that $\Phi$ is a contraction
on $X_T(R)$, for the topology of $L^{8/d}([0,T];L^4)$. As remarked in
\cite{Kato87}, $X_T(R)$, equipped 
with this topology, is complete, and thus
Proposition~\ref{prop:localSigma} follows, since $T$ will only depend
on $d$ and $R$.

\smallbreak

\noindent \emph{Step 1 (stability):} Let $\psi\in X_T(R)$ and denote
$L^q_TL^r= L^q([0,T];L^r(\R^d))$. The Strichartz
estimates yield:
\begin{align*}
  \| \Phi(\psi)\|_{L^\infty_TL^2} &\le
  \|U(\cdot) \varphi\|_{L^\infty_TL^2} +\|S_1\|_{L^\infty_TL^2}+
\|S_2\|_{L^\infty_TL^2}\\
&\le
  C_2R+C_{2,4} \left\lVert
  \lvert\psi\rvert^2\psi\right\rVert_{L^{8/(8-d)}_T L^{4/3}} +
C_{2,4} \left\lVert
  \(K_d\ast \lvert\psi\rvert^2\)\psi\right\rVert_{L^{8/(8-d)}_T L^{4/3}}.
\end{align*}
Similarly, it holds
\begin{equation*}
  \| \Phi(\psi)\|_{L^{8/d}_TL^4} \le
  C_4R+C_{4,4} \left\lVert
  \lvert\psi\rvert^2\psi\right\rVert_{L^{8/(8-d)}_T L^{4/3}} +
C_{4,4} \left\lVert
  \(K_d\ast \lvert\psi\rvert^2\)\psi\right\rVert_{L^{8/(8-d)}_T L^{4/3}}.
\end{equation*}
We shall only work on the latter estimate as the $L^\infty_TL^2$ estimate can
handled exactly in the same way. Denote $(q,r)=(8/d,4)$ and remark the
following  identities 
\begin{equation}\label{eq:kident}
  \frac{3}{4}= \frac{1}{r'}=\frac{3}{r}\quad ;\quad
  \frac{8-d}{8}=\frac{1}{q'}=\frac{1}{q}+\frac{2}{k}, \text{ with }k=
  \frac{8}{4-d}. 
\end{equation}
H\"older's inequality thus yields
\begin{equation*}
  \left\lVert
  \lvert\psi\rvert^2\psi\right\rVert_{L^{8/(8-d)}_T L^{4/3}} \le
  \lVert \psi\rVert_{L^k_TL^r}^2\lVert \psi\rVert_{L^q_TL^r}.  
\end{equation*}
By assumption, the operator ${\mathcal K}_d: u\mapsto K_d\ast |u|^2$
is bounded on $L^2(\R^d)$ and so we have (recall that $r=4$):
\begin{align*}
  \left\lVert
  \(K_d\ast \lvert\psi\rvert^2\)\psi\right\rVert_{L^{8/(8-d)}_T
  L^{4/3}}&\le \left\lVert
  K_d\ast \lvert\psi\rvert^2\right\rVert_{L^{k/2}_T L^{r/2}}\lVert
  \psi\rVert_{L^q_TL^r}\\
&\le  C \left\lVert
  \lvert\psi\rvert^2\right\rVert_{L^{k/2}_T L^{r/2}}\lVert
  \psi\rVert_{L^q_TL^r} \le C\lVert \psi\rVert_{L^k_TL^r}^2\lVert
  \psi\rVert_{L^q_TL^r}. 
\end{align*}
We note that this is exactly the same estimate as for the local cubic
term, up to the norm of ${\mathcal K}_d$. 
From Sobolev embedding (in space), we get
\begin{equation*}
  \lVert \psi\rVert_{L^k_TL^r} \le C \lVert \psi\rVert_{L^k_TH^1}\le C
  T^{1/k} \lVert \psi\rVert_{L^\infty_TH^1}\le CT^{1/k} C_2 R.
\end{equation*}
Therefore,
\begin{equation*}
  \|\Phi(\psi)\|_{L^{q}_TL^r} \le C_4R + CT^{2/k}R^3,
\end{equation*}
and if $T\in ]0,1]$ is sufficiently small, $\|
\Phi(\psi)\|_{L^{q}_TL^r} \le 4C_4 R$. 

Next, to estimate $\nabla \Phi(\psi)$, we denote by $[A,B]=AB-BA$ the
usual commutator and compute  
\begin{equation*}
  [\nabla,H]= [\nabla,V]= \nabla V ,\quad [x,H]=\frac{1}{2}[x,\Delta]
  = -\nabla. 
\end{equation*}
By assumption, $|\nabla V(x)|\le C |x|$, which shows that we will
indeed get a closed family of 
estimates for $\nabla \Phi (\psi)$ and $x \Phi (\psi)$. For instance,
\begin{equation}
  \label{eq:gradient}
  \begin{aligned}
  \nabla \Phi(\psi)(t) =& \ U(t)\nabla \varphi -i\l_1
  \int_0^tU(t-s)\nabla \(|\psi|^2\psi\)(s)\D s \\
&  -i\l_2\int_0^tU(t-s)\nabla \(\(K_d\ast |\psi|^2\)\psi\)(s)\D s \\
& -i \int_0^t U(t-s)\(\Phi(\psi(s))\nabla V\)\D s.
\end{aligned}
\end{equation}
From the Strichartz estimates, we thus have
\begin{equation*}
  \|U(\cdot )\nabla \varphi \|_{L^\infty_TL^2}\le C_2R,
\end{equation*}
and for the local cubic nonlinearity we get
\begin{align*}
  \left\lVert \int_0^tU(t-s)\nabla \(|\psi|^2\psi\)(s)\D s
  \right\rVert_{L^\infty_TL^2} &\le C_{2,4}\left\lVert
  \nabla \(|\psi|^2\psi\)
  \right\rVert_{L^{q'}_TL^{r'}}\le C\left\lVert
  |\psi|^2\lvert \nabla \psi\rvert
  \right\rVert_{L^{q'}_TL^{r'}}\\
&\le C \lVert
  \psi\rVert_{L^k_TL^r}^2\lVert \nabla\psi\rVert_{L^q_TL^r}\le CR\lVert
  \psi\rVert_{L^k_TL^r}^2 \\
&\le CT^{2/k}R^3, 
\end{align*}
where we have used the same computations as above, on $\Phi(\psi)$. For
the dipole-nonlinearity, we similarly obtain 
\begin{align*}
  \Big\lVert \int_0^t&U(t-s)\nabla \(\(K_d\ast|\psi|^2\)\psi\)(s)\D s
  \Big\rVert_{L^\infty_TL^2} \le C_{2,4}\left\lVert
  \nabla \(\(K_d\ast|\psi|^2\)\psi\)
  \right\rVert_{L^{q'}_TL^{r'}}\\
&\le C \left\lVert
  \(K_d\ast|\psi|^2\)\nabla \psi
  \right\rVert_{L^{q'}_TL^{r'}}+ 
C \left\lVert
  \(K_d\ast\nabla |\psi|^2\)\psi
  \right\rVert_{L^{q'}_TL^{r'}}\\
&\le C \lVert K_d\ast |\psi|^2
  \rVert_{L^{k/2}_TL^{r/2}}\lVert \nabla \psi\rVert_{L^q_TL^r}
+ C \lVert K_d\ast\nabla |\psi|^2
 \rVert_{L^{(1/q+1/k)^{-1}}_TL^{r/2}}\lVert\psi\rVert_{L^k_TL^r}\\
&\le C \lVert |\psi|^2
  \rVert_{L^{k/2}_TL^{r/2}}\lVert \nabla \psi\rVert_{L^q_TL^r}
+ C \lVert \nabla |\psi|^2
 \rVert_{L^{(1/q+1/k)^{-1}}_TL^{r/2}}\lVert\psi\rVert_{L^k_TL^r}\\
&\le C \lVert \psi
  \rVert_{L^{k}_TL^{r}}^2\lVert \nabla \psi\rVert_{L^q_TL^r}
+ C \lVert \psi \nabla \psi
 \rVert_{L^{(1/q+1/k)^{-1}}_TL^{r/2}}\lVert\psi\rVert_{L^k_TL^r}\\
&\le C \lVert \psi
  \rVert_{L^{k}_TL^{r}}^2\lVert \nabla \psi\rVert_{L^q_TL^r}\le CR \lVert \psi
  \rVert_{L^{k}_TL^{r}}^2\le CT^{2/k}R^3.
\end{align*}
This is the same type of estimate as for the local cubic
term. Finally, the last term in \eqref{eq:gradient} is estimated by
\begin{align*}
 \Big\lVert \int_0^t U(t-s)\(\Phi(\psi(s))\nabla V\)\D s
 \Big\rVert_{L^\infty_TL^2} \le C_{2,2}\lVert \Phi(\psi)\nabla
 V\rVert_{L^1_TL^2} \le C T \lVert x
 \Phi(\psi)\rVert_{L^\infty_TL^2}. 
\end{align*}
All in all, we find:
\begin{equation*}
  \lVert\nabla \Phi(\psi)\rVert_{L^\infty_TL^2} \le C_2 R + C T^{2/k}
  R^3 + CT \lVert x
 \Phi(\psi)\rVert_{L^\infty_TL^2}.
\end{equation*}
Analogously we obtain (with slightly shorter computations)
\begin{equation*}
  \lVert x\Phi(\psi)\rVert_{L^\infty_TL^2} \le C_2 R + C T^{2/k}
  R^3 + CT \lVert \nabla
 \Phi(\psi)\rVert_{L^\infty_TL^2}.
\end{equation*}
Summing up these two inequalities and taking $T\in ]0,1]$ sufficiently
small, we arrive at
\begin{equation*}
  \lVert\nabla \Phi(\psi)\rVert_{L^\infty_TL^2} + \lVert
  x\Phi(\psi)\rVert_{L^\infty_TL^2} \le 4C_2 R, 
\end{equation*}
and using the Strichartz estimates again, we also have
\begin{align*}
  \lVert\nabla \Phi(\psi)\rVert_{L^q_TL^r} &\le C_4 R + C T^{2/k}
  R^3 + CT \lVert x
 \Phi(\psi)\rVert_{L^\infty_TL^2},\\
\lVert x\Phi(\psi)\rVert_{L^q_TL^r} &\le C_4 R + C T^{2/k}
  R^3 + CT \lVert \nabla
 \Phi(\psi)\rVert_{L^\infty_TL^2}.
\end{align*}
Up to taking $T$ even smaller, we therefore see that $\Phi$ leaves $X_T(R)$
stable. 
\smallbreak

\noindent \emph{Step 2 (contraction):} For $\psi_1$ and $\psi_2$ in
$X_T(R)$, we have, from Strichartz estimates and the above
computations, 
\begin{align*}
   \lVert    \Phi & (\psi_1)  -\Phi(\psi_2)\rVert_{L^q_TL^r} \\
   \le & \,  C_{4,4} \lVert
  |\psi_1|^2\psi_1 -|\psi_2|^2\psi_2\rVert_{L^{q'}_TL^{r'}} +  C_{4,4} \lVert 
  \(K_d\ast |\psi_1|^2\)\psi_1 -\(K_d\ast
  |\psi_2|^2\)\psi_2\rVert_{L^{q'}_TL^{r'}} \\
 \le  & \, C \left\lVert
  \(|\psi_1|^2+|\psi_2|^2\)\lvert \psi_1-\psi_2\rvert
\right\rVert_{L^{q'}_TL^{r'}} + C \left\lVert \(K_d\ast\( 
  |\psi_1|^2-|\psi_2|^2\)\)\psi_1
\right\rVert_{L^{q'}_TL^{r'}} \\
&+C \left\lVert
  \(K_d\ast|\psi_2|^2\)\(\psi_1-\psi_2\)
\right\rVert_{L^{q'}_TL^{r'}}\\
\le & \, C T^{2/k}R^2 \lVert \psi_1-\psi_2\rVert_{L^q_TL^r} + C \left\lVert 
  |\psi_1|^2-|\psi_2|^2\right\rVert_{L^{(1/q+1/k)^{-1}}_TL^{r}}\left\lVert
  \psi_1
\right\rVert_{L^{k}_TL^{r}}\\
&+ C T^{2/k}R^2 \lVert \psi_1-\psi_2\rVert_{L^q_TL^r} \\
\le & \, C T^{2/k}R^2 \lVert \psi_1-\psi_2\rVert_{L^q_TL^r}.
\end{align*}
Therefore, $\Phi$ is a contraction provided that $T$ is
sufficiently small, and hence the result follows. 
\end{proof}

\section{Global in time existence results}
\label{sec:global}

Having established the existence and uniqueness of solutions to
\eqref{eq:NLSgen} locally in time, we now turn our attention to  
global in time results. To this end we first note that the only
obstruction for global existence in $\Sigma$ is 
the possible unboundedness of $\nabla \psi$ in $L^2(\R^d)$. Indeed, in
$\R^d$ with $1\le d\le 3$, the Gagliardo--Nirenberg inequality (see
e.g. \cite{CazCourant}) yields
\begin{equation*}
 \lVert u \rVert_{L^4}^4 \le C  \lVert u \rVert_{L^2}^{4-d} \, \lVert
 \nabla u
 \rVert_{L^2}^d. 
\end{equation*}
Thus, the boundedness of ${\mathcal K}_d$ on $L^2(\R^d)$ together with the
Cauchy--Schwarz inequality implies
\begin{equation}
\label{eq:dipolest}
\begin{split}
 \left\lvert \int_{\R^d} \(K_d\ast
\lvert u \rvert^2\)(x) \lvert u (x)\rvert^2\D x \right\rvert &\le
\left\lVert K_d\ast
\lvert u \rvert^2\right\rVert_{L^2} \left\lVert
\lvert u \rvert^2\right\rVert_{L^2} \le C \lVert u \rVert_{L^4}^4 \\
&\le C
\lVert u \rVert_{L^2}^{4-d} \, \lVert 
 \nabla u
 \rVert_{L^2}^d,
 \end{split}
\end{equation}
where we have used Gagliardo--Nirenberg inequalities. So we see that
in view of the conservation of the mass, the (conserved) energy is
the sum of four terms, three of which are bounded provided that
$\nabla \psi$ remains bounded in $L^2(\R^d)$. Therefore, the fourth
term is bounded, that is, $x\psi(t,x)$ is bounded in
$L^2(\R^d)$. This shows that unless $\lVert \nabla
\psi(t)\rVert_{L^2}$ becomes unbounded, $\|\psi(t)\|_{\Sigma}$ is a
continuous function of $t\ge 0$. This directly yields the following corollary:
\begin{corollary}\label{cor:maximal}
  Under the same assumptions as in Proposition \ref{prop:localSigma},
  there exists a $T_*\in \R_+\cup \{\infty\}$, such 
  that \eqref{eq:NLSgen} 
  has a unique maximal solution in
\begin{equation*}
   \left\{ \psi \in C([0,T_*[;\Sigma)\ ;\ \psi,\nabla
    \psi, x\psi \in C([0,T_*[;L^2(\R^d))\cap L^{8/d}_{\rm
    loc}([0,T_*[;L^4(\R^d)) \right\},  
  \end{equation*}
such that $\psi_{\mid t=0}=\varphi$. It is maximal in the sense that
if $T_*<\infty$, then  
\begin{equation*}
  \left\lVert \nabla \psi(t)\right\rVert_{L^2}\Tend t {T_*}+\infty. 
\end{equation*}
\end{corollary}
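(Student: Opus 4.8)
The plan is to upgrade the fixed-time local existence statement of Proposition~\ref{prop:localSigma} into a maximal-time statement by a standard continuation argument, exploiting the crucial fact — already noted in the discussion preceding the corollary — that control of $\|\nabla\psi(t)\|_{L^2}$ alone suffices to control the full $\Sigma$-norm. First I would define $T_*$ as the supremum of all $T>0$ for which \eqref{eq:NLSgen} admits a solution in the space $X_T$ of Proposition~\ref{prop:localSigma}; since that proposition produces at least one such $T>0$, the set is nonempty and $T_*\in(0,\infty]$ is well-defined. Uniqueness on $[0,T_*[$ follows because any two solutions in the indicated class agree on a common subinterval by the uniqueness part of Proposition~\ref{prop:localSigma} (applied with the common data at any time $t_0$), and a connectedness argument then forces agreement up to $T_*$.

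Next I would establish the blow-up alternative. Suppose $T_*<\infty$ but $\|\nabla\psi(t)\|_{L^2}$ does \emph{not} tend to $+\infty$ as $t\to T_*$; then there is a sequence $t_n\uparrow T_*$ and a constant $M_0$ with $\|\nabla\psi(t_n)\|_{L^2}\le M_0$. By the argument recalled before the corollary — conservation of mass and energy, the Gagliardo--Nirenberg bound, the $L^2$-boundedness of $\mathcal K_d$, and Cauchy--Schwarz as in \eqref{eq:dipolest} — a bound on $\|\nabla\psi(t_n)\|_{L^2}$ together with conserved $M$ and $E$ yields a bound on $\|x\psi(t_n)\|_{L^2}$, hence on $\|\psi(t_n)\|_\Sigma$, uniformly in $n$. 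Call this bound $R_0$. Now apply Proposition~\ref{prop:localSigma} with initial data $\psi(t_n)$: it furnishes a solution on $[t_n,t_n+\tau]$ with $\tau=\tau(R_0)>0$ \emph{independent of $n$}. Choosing $n$ large enough that $t_n+\tau>T_*$, and gluing this solution to the one on $[0,t_n]$ (the two coincide on the overlap by uniqueness), we obtain a solution on an interval strictly beyond $T_*$, contradicting the definition of $T_*$. Hence $\|\nabla\psi(t)\|_{L^2}\to+\infty$ as $t\to T_*$ when $T_*<\infty$.

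The one point requiring a little care — and the place I would expect to be the main obstacle — is the gluing/regularity step: one must check that concatenating the solution on $[0,t_n]$ with the one on $[t_n,t_n+\tau]$ indeed lands in the class $C([0,T_*[;\Sigma)$ with $\psi,\nabla\psi,x\psi\in C([0,T_*[;L^2)\cap L^{8/d}_{\mathrm{loc}}([0,T_*[;L^4)$, i.e. that there is no loss of continuity at the junction $t=t_n$ and that the local $L^{8/d}$-in-time integrability is genuinely only local. Continuity at the junction is automatic since both pieces satisfy Duhamel's formula and take the value $\psi(t_n)$ there; the $L^{8/d}_{\mathrm{loc}}$ statement follows because on each compact subinterval of $[0,T_*[$ the solution is covered by finitely many of the local existence intervals, on each of which the $L^{8/d}L^4$ norm is finite by construction of $X_T$. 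I would also note in passing that the argument uses $1\le d\le 3$ precisely so that the exponent of $\|\nabla u\|_{L^2}$ in \eqref{eq:dipolest} is $d\le 4<$ the critical power, which is what makes ``bounded $\nabla\psi$ $\Rightarrow$ bounded energy terms'' work. With these observations in place the corollary follows directly.
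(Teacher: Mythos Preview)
Your proposal is correct and follows exactly the standard continuation argument the paper has in mind: the paper does not give a detailed proof of the corollary but simply says ``This directly yields the following corollary'' after establishing (via conservation of $M$, $E$, Gagliardo--Nirenberg, and \eqref{eq:dipolest}) that a bound on $\|\nabla\psi(t)\|_{L^2}$ forces a bound on $\|\psi(t)\|_\Sigma$. Your write-up merely makes explicit the implicit iteration of Proposition~\ref{prop:localSigma} (with existence time depending only on $\|\psi(t_n)\|_\Sigma$) and the gluing, so there is nothing to add.
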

Having in mind this result, we are now able to study the
global in time existence of solutions to \eqref{eq:NLSgen}  
depending on the spatial dimension. 
In the following we shall first treat the case $d=1$, where  
the picture is much more concise,  before moving on to the case $d=3$
(the case $d=2$, is similar to  
the one of three spatial dimensions and thus omitted for
simplicity). Note that, in view of the  
dimensional reduction discussed in Section \ref{sec:reduction}, the
one-dimensional case is not purely academic.

\subsection{Global existence for $d=1$}\label{sec:global1d}

When $d=1$, the solution constructed in the previous section indeed
exists for all time. More precisely we have:

\begin{corollary}\label{cor:global1DSigma}
  Suppose that in Proposition~\ref{prop:localSigma}, $d=1$. Then the
  solution to 
  \eqref{eq:NLSgen} is global in time, i.e. $T_*=\infty$ in
  Corollary~\ref{cor:maximal}.  
\end{corollary}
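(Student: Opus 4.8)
The plan is to show that in the one-dimensional case the $L^2$-norm of $\nabla\psi(t)$ cannot blow up in finite time, so that by Corollary~\ref{cor:maximal} the maximal solution is global. The natural tool is the conservation of energy \eqref{eq:energy} combined with the conservation of mass \eqref{eq:mass}. First I would write the energy as
\[
\frac12\lVert\nabla\psi(t)\rVert_{L^2}^2 = E - \int_{\R}V_1(x)\lvert\psi(t,x)\rvert^2\,\D x - \frac{\l_1}{2}\lVert\psi(t)\rVert_{L^4}^4 - \frac{\l_2}{2}\int_{\R}\(K_1\ast\lvert\psi\rvert^2\)(t,x)\lvert\psi(t,x)\rvert^2\,\D x,
\]
and then bound the three correction terms from above. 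The potential term is nonnegative (since $V_1\ge 0$), so it only helps; it can simply be dropped when estimating $\lVert\nabla\psi\rVert_{L^2}^2$ from above, or kept and used to control $\lVert x\psi\rVert_{L^2}$ separately.

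The key point specific to $d=1$ is the exponent in the Gagliardo--Nirenberg estimate \eqref{eq:dipolest}: with $d=1$ it reads
\[
\Big\lvert\int_{\R}\(K_1\ast\lvert\psi\rvert^2\)\lvert\psi\rvert^2\,\D x\Big\rvert \le C\lVert\psi\rVert_{L^4}^4 \le C\lVert\psi\rVert_{L^2}^{3}\,\lVert\nabla\psi\rVert_{L^2}^{1},
\]
and likewise $\lVert\psi\rVert_{L^4}^4\le C\lVert\psi\rVert_{L^2}^3\lVert\nabla\psi\rVert_{L^2}$. Since mass is conserved, $\lVert\psi(t)\rVert_{L^2}=\lVert\varphi\rVert_{L^2}$, so both nonlinear terms are bounded by $C\lVert\nabla\psi(t)\rVert_{L^2}$, i.e. they grow at most \emph{linearly} in $\lVert\nabla\psi\rVert_{L^2}$, whereas the kinetic term grows quadratically. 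Plugging this into the energy identity gives
\[
\frac12\lVert\nabla\psi(t)\rVert_{L^2}^2 \le \lvert E\rvert + C\(\lvert\l_1\rvert+\lvert\l_2\rvert\)\lVert\varphi\rVert_{L^2}^{3}\,\lVert\nabla\psi(t)\rVert_{L^2},
\]
which is an inequality of the form $\tfrac12 y^2 \le A + B y$ for the nonnegative quantity $y=\lVert\nabla\psi(t)\rVert_{L^2}$. This forces $y\le B + \sqrt{B^2+2A}$, an \emph{a priori} bound depending only on $E$, $\l_1$, $\l_2$ and $\lVert\varphi\rVert_{L^2}$, uniformly in $t\in[0,T_*[$.

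Having thus bounded $\lVert\nabla\psi(t)\rVert_{L^2}$ uniformly on $[0,T_*[$, the blow-up alternative in Corollary~\ref{cor:maximal} immediately rules out $T_*<\infty$, so $T_*=\infty$ and the solution is global; one also recovers, via the energy identity, a uniform bound on $\lVert x\psi(t)\rVert_{L^2}$ and hence on $\lVert\psi(t)\rVert_{\Sigma}$. The only mild subtlety — and the step I would be most careful about — is making sure the subcritical Gagliardo--Nirenberg power $\lVert\nabla\psi\rVert_{L^2}^{d}$ really is strictly less than $2$, which is exactly what $d=1$ buys us (for $d=2$ the power is critical and one needs a smallness condition on $\l_1,\l_2$ or on the mass, which is why the two- and three-dimensional cases are treated separately); everything else is a routine application of conservation laws and Young's inequality. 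No genuinely hard estimate is involved here, the work having already been done in establishing \eqref{eq:dipolest} and the local theory.
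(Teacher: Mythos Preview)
Your proof is correct and follows essentially the same route as the paper: energy conservation together with the $d=1$ Gagliardo--Nirenberg inequality \eqref{eq:dipolest} bounds the nonlinear terms linearly in $\lVert\nabla\psi\rVert_{L^2}$, yielding a uniform a~priori bound that rules out blow-up via Corollary~\ref{cor:maximal}. The paper's argument is slightly terser (it does not spell out the quadratic-inequality step), but the substance is identical.
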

\begin{proof}
  This result stems from Gagliardo--Nirenberg inequality, like its
  classical counterpart when $\l_2=0$. In view of
  Corollary~\ref{cor:maximal}, we just need an \emph{a priori}
  estimate for $\|\nabla \psi(t)\|_{L^2}$. Indeed the conservation of energy
  yields
  \begin{align*}
    \lVert \nabla \psi(t)\rVert_{L^2}^2 & \le 2E + |\l_1|\lVert
    \psi(t)\rVert_{L^4}^4 + |\l_2| \left\lvert \int_{\R} \(K_1\ast
\lvert\psi\rvert^2\)(t,x) \lvert \psi(t,x)\rvert^2\D x\right\rvert\\
&\le 2E + C  \lVert \psi(t)\rVert_{L^2}^3 \lVert \nabla
    \psi(t)\rVert_{L^2}, 
  \end{align*}
where we have again used the estimates \eqref{eq:dipolest}. Since  $M=
\lVert \psi(t)\rVert_{L^2}^2$ is   
constant in time, this inequality directly shows that $\|\nabla
\psi(t)\|_{L^2}$ remains bounded and hence the  result is proven. 
\end{proof}
When  $d=1$, a more general setting is possible, which allows us
to work with initial data that are only  
in $L^2$ but not necessarily in $\Sigma$ (corresponding to solutions
with possibly infinite energy). The following result is an adaptation
of the main result in \cite{TsutsumiL2}.

\begin{theorem}\label{thm:globalL2}
  Let $d=1$, $V_1$ be quadratic,  and $\l_1,\l_2\in
  \R$. Assume that the operator  ${\mathcal K}_1: 
  u\mapsto K_1\ast u$ is bounded on $L^2(\R)$. Then, for any 
  $ \varphi \in L^2(\R)$, \eqref{eq:NLSgen}
  has a unique solution 
  \begin{equation*}
    \psi\in C\(\R_+;L^2\)\cap L^{8}_{\rm loc}\(\R_+;L^4\(\R\)\).
  \end{equation*}
Moreover its total mass $M$ is
 independent of $t\ge 0$.
\end{theorem}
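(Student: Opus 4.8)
The plan is to adapt Tsutsumi's $L^2$ well-posedness argument for the $L^2$-critical NLS to the present dipolar setting, exploiting the crucial observation already made in the proof of Proposition~\ref{prop:localSigma}: since $\mathcal K_1$ is bounded on $L^2(\R)$, the nonlocal cubic term $(K_1\ast|\psi|^2)\psi$ obeys exactly the same Strichartz-type estimates as the local cubic term $|\psi|^2\psi$, up to the operator norm $\|\mathcal K_1\|_{L^2\to L^2}$. In dimension $d=1$ the cubic nonlinearity is $L^2$-\emph{subcritical} (the scaling exponent for $L^2$-criticality is the quintic one), so one expects a contraction on a ball of $L^q_T L^r$ with $T$ depending only on $\|\varphi\|_{L^2}$, and then iteration gives a global solution.

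First I would fix the admissible pair $(q,r)=(8,4)$ from \eqref{eq:pairs} (with $d=1$), and set up Duhamel's formula $\psi=\Phi(\psi)$ with $\Phi(\psi)(t)=U(t)\varphi+S_1(t)+S_2(t)$ exactly as before. Using Strichartz estimates \eqref{eq:strich}--\eqref{eq:strichnl} together with the Hölder bookkeeping \eqref{eq:kident} (with $d=1$, so $k=8/3$ and $q'=8/7$, $r'=4/3$), and using boundedness of $\mathcal K_1$ on $L^2$ to handle $S_2$, one gets for $\psi$ in a ball $\{\|\psi\|_{L^8_TL^4}\le 2C_4\|\varphi\|_{L^2}\}$ a bound of the form
\begin{equation*}
  \|\Phi(\psi)\|_{L^8_TL^4}\le C_4\|\varphi\|_{L^2}+CT^{2/k}\|\psi\|_{L^8_TL^4}^3,\qquad
  \|\Phi(\psi_1)-\Phi(\psi_2)\|_{L^8_TL^4}\le CT^{2/k}\big(\|\psi_1\|+\|\psi_2\|\big)^2\|\psi_1-\psi_2\|_{L^8_TL^4},
\end{equation*}
where the gain $T^{2/k}$ comes from Hölder in time on the $L^k_T$ factors (here one crucially does not need any Sobolev embedding in space, unlike in Proposition~\ref{prop:localSigma}, so $L^2$ data suffice). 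Hence for $T=T(\|\varphi\|_{L^2})$ small enough $\Phi$ is a contraction on this ball, giving a unique fixed point $\psi\in C([0,T];L^2)\cap L^8([0,T];L^4)$; a standard argument with all admissible pairs upgrades this to $\psi\in L^q([0,T];L^r)$ for every admissible $(q,r)$ and gives uniqueness in $C([0,T];L^2)\cap L^8_{\mathrm{loc}}$. Conservation of mass $\|\psi(t)\|_{L^2}=\|\varphi\|_{L^2}$ follows by the usual approximation/density argument (approximate $\varphi$ by $\Sigma$ data, use Proposition~\ref{prop:localSigma} and its conservation law, pass to the limit).

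Since the local existence time $T$ depends only on $\|\varphi\|_{L^2}$, which is conserved, the solution extends to all of $\R_+$ by iterating the local result on successive intervals of length $T$: this is where the $L^2$-subcriticality of the cubic term in $d=1$ is essential, and it is exactly why the analogous $L^2$ statement fails in $d=3$. The main obstacle I anticipate is not any single estimate but the careful packaging of the uniqueness class and the density argument for mass conservation: one must check that the fixed point obtained in $L^8_TL^4$ is unique in the larger natural class $C([0,T];L^2)\cap L^8_{\mathrm{loc}}$ (a retrieval-of-regularity argument à la Kato, using that two such solutions both lie in $L^8$ on small subintervals so the contraction estimate applies), and that the approximating sequence built from $\Sigma$ data converges in $C([0,T];L^2)$ with uniform-in-$n$ Strichartz bounds so that the nonlocal term passes to the limit; the boundedness of $\mathcal K_1$ on $L^2$ makes the nonlocal term no worse than the local one at every step, so no genuinely new difficulty arises beyond what is in \cite{TsutsumiL2}.
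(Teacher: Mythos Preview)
Your proposal is correct and follows essentially the same route as the paper: fix the admissible pair $(8,4)$, use the boundedness of $\mathcal K_1$ on $L^2$ to treat the nonlocal term exactly like the local cubic one, and exploit that in $d=1$ the H\"older-in-time interpolation $\|\psi\|_{L^{8/3}_TL^4}\le T^{1/4}\|\psi\|_{L^8_TL^4}$ yields a positive power of $T$ without any Sobolev embedding, so the contraction time depends only on $\|\varphi\|_{L^2}$ and mass conservation globalizes. The only quibble is the exponent you quote: the gain here is $T^{1/2}$ (two factors of $T^{1/4}$), not $T^{2/k}=T^{3/4}$, which was the exponent arising in Proposition~\ref{prop:localSigma} via a different mechanism; this is cosmetic and does not affect the argument.
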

\begin{proof}
 We shall first prove that there exists a $T$, depending only on 
  $\| \varphi\|_{L^2(\R)}$, such that \eqref{eq:NLSgen} has a unique
  solution 
  \begin{equation*}
    \psi\in C\([0,T];L^2\)\cap L^{8}\([0,T];L^4\(\R\)\), 
  \end{equation*} 
To this end we resume the same scheme as in the proof of
Proposition~\ref{prop:localSigma} 
above, but now $R=\| \varphi\|_{L^2}$ and 
\begin{align*}
  Y_T(R)=\big\{ &\psi \in L^\infty([0,T];L^2)\cap L^8([0,T];L^4)\ ;\
  \lVert \psi 
  \rVert_{L^\infty([0,T];L^2)}\le 2C_2R,\\
& \text{ and }  \lVert  \psi
  \rVert_{L^{8}([0,T];L^4)}\le 2C_4R\big\}.  
\end{align*}
We simply notice that for $d=1$, the identities \eqref{eq:kident}
yield $k=8/3$, and so H\"older's 
inequality in time implies
\begin{equation*}
  \lVert \psi\rVert_{L^k_TL^r} = \lVert \psi\rVert_{L^{8/3}_TL^4}\le
  T^{1/4}  \lVert \psi\rVert_{L^{8}_TL^4}=T^{1/4}\lVert
  \psi\rVert_{L^{q}_TL^r}. 
\end{equation*}
Following the same lines as in the previous paragraph, we see that
$\Phi$ leaves $Y_T(R)$ stable, provided that $T$ is sufficiently
small. The proof of contraction is the same as in the previous
paragraph, up to the modification of the estimate for $\lVert
\psi\rVert_{L^k_TL^r}$. This yields a local in time existence and
uniqueness result and since the existence time 
depends only  on $\| \varphi\|_{L^2(\R)}$, the conservation of the total mass
$M$ directly implies the global in time existence.
\end{proof}
In summary the Gross--Pitaevskii equation \eqref{eq:NLSgen} is globally
well-posed for $d=1$. The  
three dimensional case is more involved, though, as we shall see.

\subsection{Global existence for $d=3$ (stable regime)}\label{sec:global3d}

We can now turn to the physically most important case $d= 3$. 
\begin{theorem}\label{thm:global3DSigma-1}
  Under the same assumptions as in Proposition~\ref{prop:localSigma},
  suppose that in 
  addition $\l_1\ge \frac{4\pi}{3} \l_2\ge 0$. Then the 
  solution is global in time, i.e. $T_*=\infty$.  
\end{theorem}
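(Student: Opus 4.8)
The plan is to combine the conservation of energy \eqref{eq:energy} with Corollary~\ref{cor:maximal}. By that corollary, the only obstruction to global existence is the blow-up of $\|\nabla\psi(t)\|_{L^2}$, so it suffices to produce an \emph{a priori} bound on this quantity on the maximal interval $[0,T_*[$. For $\varphi\in\Sigma$ the energy $E$ is finite (using \eqref{eq:dipolest}) and conserved. In the expression \eqref{eq:energy} for $E$, the harmonic-potential term $\int_{\R^3}V_3\lvert\psi\rvert^2$ is manifestly nonnegative, so everything reduces to showing that, under the hypothesis $\l_1\ge\frac{4\pi}{3}\l_2\ge 0$, the nonlinear part
\[
\frac{\l_1}{2}\lVert\psi\rVert_{L^4}^4+\frac{\l_2}{2}\int_{\R^3}\bigl(K\ast\lvert\psi\rvert^2\bigr)\lvert\psi\rvert^2\,dx
\]
is also nonnegative; then $\tfrac12\lVert\nabla\psi(t)\rVert_{L^2}^2\le E$ on $[0,T_*[$ and Corollary~\ref{cor:maximal} forces $T_*=\infty$.

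To see this nonnegativity, note first that $\psi(t)\in\Sigma\hookrightarrow H^1(\R^3)\hookrightarrow L^4(\R^3)$, so $\lvert\psi\rvert^2\in L^2(\R^3)$ and, by Lemma~\ref{lem:CZ}, $K\ast\lvert\psi\rvert^2\in L^2(\R^3)$; the convolution term is thus a well-defined real number. I would then pass to the Fourier side: by Plancherel and $\widehat{K\ast\lvert\psi\rvert^2}=\widehat K\,\widehat{\lvert\psi\rvert^2}$ (legitimate since $\widehat K\in L^\infty$ by Lemma~\ref{lem:FourierK} and $\widehat{\lvert\psi\rvert^2}\in L^2$),
\[
\int_{\R^3}\bigl(K\ast\lvert\psi\rvert^2\bigr)\lvert\psi\rvert^2\,dx=\frac{1}{(2\pi)^3}\int_{\R^3}\widehat K(\xi)\,\bigl\lvert\widehat{\lvert\psi\rvert^2}(\xi)\bigr\rvert^2\,d\xi .
\]
By Lemma~\ref{lem:FourierK}, $\widehat K(\xi)=\frac{4\pi}{3}\bigl(3\xi_3^2/\lvert\xi\rvert^2-1\bigr)$ takes values in $[-\frac{4\pi}{3},\frac{8\pi}{3}]$, so in particular $\widehat K\ge -\frac{4\pi}{3}$ pointwise. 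Hence
\[
\int_{\R^3}\bigl(K\ast\lvert\psi\rvert^2\bigr)\lvert\psi\rvert^2\,dx\ \ge\ -\frac{4\pi}{3}\,\frac{1}{(2\pi)^3}\bigl\lVert\widehat{\lvert\psi\rvert^2}\bigr\rVert_{L^2}^2=-\frac{4\pi}{3}\bigl\lVert\lvert\psi\rvert^2\bigr\rVert_{L^2}^2=-\frac{4\pi}{3}\lVert\psi\rVert_{L^4}^4 ,
\]
and, since $\l_2\ge 0$,
\[
\frac{\l_1}{2}\lVert\psi\rVert_{L^4}^4+\frac{\l_2}{2}\int_{\R^3}\bigl(K\ast\lvert\psi\rvert^2\bigr)\lvert\psi\rvert^2\,dx\ \ge\ \frac12\Bigl(\l_1-\frac{4\pi}{3}\l_2\Bigr)\lVert\psi\rVert_{L^4}^4\ \ge\ 0
\]
by the assumption $\l_1\ge\frac{4\pi}{3}\l_2$.

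Putting the pieces together, $E=\tfrac12\lVert\nabla\psi(t)\rVert_{L^2}^2+(\text{nonnegative terms})$, so $\lVert\nabla\psi(t)\rVert_{L^2}^2\le 2E$ uniformly on $[0,T_*[$; by Corollary~\ref{cor:maximal} this rules out $T_*<\infty$, and the solution is global. I do not expect a genuine obstacle here: the argument is essentially mechanical once Lemma~\ref{lem:FourierK} is in hand, the only point deserving care being the passage to the Fourier side for $\lvert\psi\rvert^2$, which is justified by the $L^2$-boundedness of $\mathcal K$ (Lemma~\ref{lem:CZ}) together with $\widehat K\in L^\infty$. The real content of the statement is simply that the sharp lower bound $\widehat K\ge -\frac{4\pi}{3}$ makes the dipolar energy controllable by the defocusing cubic term precisely when $\l_1\ge\frac{4\pi}{3}\l_2\ge 0$.
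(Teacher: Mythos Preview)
Your proof is correct and follows essentially the same approach as the paper: both pass to the Fourier side via Plancherel, use the pointwise lower bound $\widehat K\ge -\tfrac{4\pi}{3}$ from Lemma~\ref{lem:FourierK} to show that the combined nonlinear energy is nonnegative under the hypothesis $\l_1\ge\tfrac{4\pi}{3}\l_2\ge 0$, deduce $\lVert\nabla\psi(t)\rVert_{L^2}^2\le 2E$, and conclude via Corollary~\ref{cor:maximal}. Your version is slightly more explicit in justifying the Fourier manipulation, but the argument is the same.
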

In the following, the situation where $\l_1\ge \frac{4\pi}{3} \l_2\ge
0$ will be called the stable regime (note that $\lambda_2 >0$  
corresponds to the actual physical situation).
\begin{proof}  We first note that from Plancherel's formula for $\rho
  =\lvert\psi\rvert^2$ we get 
\begin{equation*}
  \|\psi(t)\|_{L^4}^4 =\|\rho(t)\|_{L^2}^2 = \frac{1}{(2\pi)^3}
  \left\lVert\widehat \rho(t)\right\rVert_{L^2}^2 .
\end{equation*}
  Then we simply use the conservation of the energy $E$, as defined in
  \eqref{eq:energy}, to estimate 
  \begin{align*}
    \lVert \nabla \psi(t)\rVert_{L^2}^2 & =2E -\int_{\R^3} V_3(x)\lvert
    \psi(t,x)\rvert^2\D x - \l_1\lVert
    \psi(t)\rVert_{L^4}^4 \\
&\phantom{=}- \l_2 \int_{\R^3} \(K_1\ast
\lvert\psi\rvert^2\)(t,x) \lvert \psi(t,x)\rvert^2\D x\\
&\le 2E -\frac{1}{(2\pi)^3} \int_{\R^3} \(\l_1 +\l_2 \widehat K(\xi)\)\lvert
    \widehat \rho(\xi)\rvert^2 \D \xi, 
  \end{align*}
invoking the above given identity. 
Recalling the explicit formula for $\widehat K$ computed in
Lemma~\ref{lem:FourierK}, we obtain 
\begin{equation*}
  \int_{\R^3} \(\l_1 +\l_2 \widehat K(\xi)\)\lvert
    \widehat \rho(\xi)\rvert^2 \D \xi\ge \int_{\R^3} \(\l_1
    -\frac{4\pi}{3} \l_2 \)\lvert  
    \widehat \rho(\xi)\rvert^2 \D \xi.
\end{equation*}
By assumption, this quantity is non-negative and hence the
\emph{a priori} estimate
\begin{equation*}
  \lVert \nabla \psi(t)\rVert_{L^2}^2 \le 2E
\end{equation*}
holds true. 
The result then follows directly from Corollary~\ref{cor:maximal}. 
\end{proof}

Having established this result, it is natural to ask what happens in
the unstable regime $\l_1 < \frac{4\pi}{3} \l_2$.  
As will shall see, in general we cannot expect a global in time
result there, since finite time blow-up of solutions (in the sense of  
Corollary~\ref{cor:maximal}) may occur. In particular, it is clear
that if blow-up occurs, 
then the problem of dimension reduction ceases to make sense. The 
next section is devoted to the study of this problem.

\section{The unstable regime}\label{sec:unstable}

When $d=3$ (or $2$), the solution constructed in
Proposition~\ref{prop:localSigma} need not remain in $\Sigma$ for all
time. This will be seen from using a general virial computation (see
\cite{CazCourant}) and following the approach by Zakharov \cite{Z} and
Glassey \cite{Glassey}. 

\subsection{Finite time blow-up}\label{sec:virial}

As a preliminary, we check that the energy $E$ may be negative. 
\begin{lemma}\label{lem:negative}
  Let $d=3$. Assume that 
  \begin{equation*}
    \l_1 < \cst \l_2.
  \end{equation*}
There exists $\varphi\in \Sigma$ such that $E<0$, where the
energy $E$ is as in Proposition~\ref{prop:localSigma}. 
\end{lemma}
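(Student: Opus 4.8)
The plan is to exhibit an explicit family of trial functions for which the energy $E$ becomes negative, exploiting the fact that in the unstable regime $\l_1 < \frac{4\pi}{3}\l_2$ the combined local-plus-dipole nonlinear part of the energy can be made negative. First I would work in Fourier variables: for $\rho = |\psi|^2$ with $\widehat\rho$ concentrated near a direction $\xi$ where $\widehat K(\xi)$ is close to its maximum value $\frac{8\pi}{3}$ (i.e. $\xi$ along the dipole axis $n=(0,0,1)$), the nonlinear energy
\begin{equation*}
  \frac{\l_1}{2}\|\psi\|_{L^4}^4 + \frac{\l_2}{2}\int_{\R^3}\(K\ast|\psi|^2\)|\psi|^2\,dx
  = \frac{1}{2(2\pi)^3}\int_{\R^3}\(\l_1 + \l_2\widehat K(\xi)\)|\widehat\rho(\xi)|^2\,d\xi
\end{equation*}
can be driven strictly negative, since $\l_1 + \l_2\widehat K(\xi)$ attains a negative value somewhere on the sphere precisely when $\l_1 < \frac{4\pi}{3}\l_2$ (the range of $\widehat K$ being $[-\frac{4\pi}{3},\frac{8\pi}{3}]$).

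The cleanest concrete realization is a scaling argument. I would fix one suitable profile $\chi\in\Sch(\R^3)$ (for instance an anisotropic Gaussian elongated along $x_3$, or more robustly a function whose $|\chi|^2$ has Fourier transform supported in a small cone around the $\xi_3$-axis so that the nonlinear term is negative) and set $\varphi_\si(x) = \si^{3/2}\chi(\si x)$, or a variant $\varphi_{\si,\mu}(x)=\si\mu^{1/2}\chi(\si x_1,\si x_2,\mu x_3)$, keeping $\|\varphi_\si\|_{L^2}$ of fixed order. Under the isotropic dilation the kinetic term scales like $\si^2$, the potential term like $\si^{-2}$, and — crucially, because $K$ is homogeneous of degree $-3$ in $\R^3$, so that $K\ast|\varphi_\si|^2$ scales exactly like $|\varphi_\si|^2$ in the relevant pairing — both nonlinear terms scale like $\si^3$. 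Hence
\begin{equation*}
  E(\varphi_\si) = A\si^2 + B\si^{-2} + C\si^3,
\end{equation*}
where $A,B>0$ depend on $\chi$ but $C = C(\chi)$ can be arranged to be strictly negative by the choice of profile above. Letting $\si\to\infty$ then forces $E(\varphi_\si)\to-\infty$, and in particular $E<0$ for $\si$ large; one checks $\varphi_\si\in\Sigma$ trivially since $\chi\in\Sch$.

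The one genuine obstacle is producing a profile $\chi$ with $C(\chi)<0$, i.e. making the nonlinear energy strictly negative; everything else is bookkeeping. For this I would argue by a limiting/density consideration: choose $\widehat{\,|\chi|^2\,} = g$ to be a smooth bump concentrated in a narrow cone about the $\xi_3$-axis where $\l_1+\l_2\widehat K(\xi) \le \l_1 - \frac{4\pi}{3}\l_2 + \eps < 0$; then $\int(\l_1+\l_2\widehat K)|g|^2 < 0$. The subtlety is that $g$ must be the Fourier transform of a nonnegative function (a modulus squared); this is handled by taking $g = |\widehat\chi|^2$ up to normalization — more precisely, pick $\chi$ so that $\widehat\chi$ itself is concentrated near the $\xi_3$-axis, whence $\widehat{|\chi|^2} = \widehat\chi \ast \overline{\widehat{\chi(-\cdot)}}$ (a convolution of two such bumps) is still essentially supported in a slightly larger cone about the axis, on which $\widehat K$ is still close to $\frac{8\pi}{3}$. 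Choosing the cone narrow enough relative to how much $\l_1 < \frac{4\pi}{3}\l_2$ gives $C(\chi)<0$, and the proof concludes. A fully explicit alternative, avoiding the cone argument entirely, is to take $\chi$ a Gaussian stretched along $x_3$ and compute $C$ in closed form using Lemma~\ref{lem:FourierK}; the anisotropy parameter can then be tuned to make $C<0$ whenever $\l_1<\frac{4\pi}{3}\l_2$.
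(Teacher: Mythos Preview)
Your two–step strategy (first find a profile $\chi$ with strictly negative nonlinear energy $C(\chi)<0$, then dilate isotropically so that the $\si^3$ term dominates) is sound and would prove the lemma. It is also organized differently from the paper: there the authors use a \emph{single} one–parameter anisotropic family $\varphi(x)=\eps^{\alpha/2}f(x_1,x_2)g(\eps x_3)$ and read off directly that for $\alpha<-2$ the leading term of $E$ is $(\l_1-\frac{4\pi}{3}\l_2)\,\eps^{2\alpha-1}<0$, without having to first isolate a good profile. Your approach is perhaps conceptually cleaner (it separates the ``find a bad direction in Fourier space'' step from the scaling step), while the paper's is more self-contained (one explicit family does everything at once).

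However, there is a genuine sign error in your choice of direction. You want $\l_1+\l_2\widehat K(\xi)$ to be \emph{negative} on the essential support of $\widehat\rho$. For the physical case $\l_2>0$, this means concentrating $\widehat\rho$ where $\widehat K$ is near its \emph{minimum} $-\frac{4\pi}{3}$, i.e.\ near the plane $\{\xi_3=0\}$, not near the $\xi_3$--axis where $\widehat K\approx\frac{8\pi}{3}$ and $\l_1+\l_2\widehat K\approx \l_1+\frac{8\pi}{3}\l_2>0$. With your stated cone (around the dipole axis) the quantity $C(\chi)$ comes out \emph{positive} when $\l_2>0$, and the argument collapses. Ironically, your offhand suggestion of ``an anisotropic Gaussian elongated along $x_3$'' actually works: if $\chi$ is stretched in $x_3$ then $\widehat{|\chi|^2}$ is squashed in $\xi_3$, hence concentrated near $\{\xi_3=0\}$, and $C(\chi)<0$ follows --- but for the opposite reason from the one you give. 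Note also that the choice ``concentrate near $\{\xi_3=0\}$'' gives $\l_1+\l_2\widehat K\approx \l_1-\frac{4\pi}{3}\l_2<0$ directly from the hypothesis, uniformly in the sign of $\l_2$; this is exactly the mechanism the paper exploits. Once you flip the direction, your argument goes through.
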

\begin{proof}
  We first rewrite the last term in the energy, thanks to Plancherel
  formula:
  \begin{equation*}
    \int_{\R^3} \(K\ast
\lvert\varphi\rvert^2\)(x) \lvert \varphi(x)\rvert^2dx =
\frac{1}{(2\pi)^3}\int_{\R^3} \widehat K(\xi) \lvert\widehat
\rho(\xi)\rvert^2d\xi,  
  \end{equation*}
where we have denoted $\rho=\lvert \varphi\rvert^2$. In view of
Lemma~\ref{lem:FourierK}, we infer
\begin{equation*}
  \int_{\R^3} \(K\ast
\lvert\varphi\rvert^2\)(x) \lvert \varphi(x)\rvert^2dx =
\frac{1}{(2\pi)^2} \int_{\R^3}\(3\frac{\xi_3^2}{\lvert
  \xi\rvert^2}-1\) \lvert\widehat
\rho(\xi)\rvert^2d\xi.
\end{equation*}
The idea of the proof then consists in choosing $\varphi$ so that
$\widehat \rho$ has little mass on $\{\xi_3=0\}= \{\xi\cdot
n=0\}$. Indeed, using Plancherel formula again, we can rewrite the
energy as
\begin{equation}
\label{eq:encomp}
\begin{aligned}
  E=& \frac{1}{2}\lVert\nabla \varphi \rVert_{L^2}^2
+\int_{\R^3}V(x) \lvert  \varphi(x)\rvert^2\D x \\
&+ \frac{1}{2(2\pi)^3}  \int_{\R^3}\(\l_1 +\frac{4\pi}{3} \l_2
\(3\frac{\xi_3^2}{\lvert 
  \xi\rvert^2}-1\) \)\lvert\widehat
\rho(\xi)\rvert^2\D \xi. 
\end{aligned}
\end{equation}
Introduce a parameter $\eps>0$, and fix some functions $f \in
\Sch(\R^2)$ and $g\in \Sch(\R)$ independent of $\eps$. Set
\begin{equation*}
  \varphi(x) =\eps^{\alpha/2} f(x_1,x_2)g(\eps x_3),
\end{equation*}
for some constant $\alpha$ to be fixed later. We have
\begin{equation*}
  \rho(x) = \lvert \varphi(x)\rvert^2 = \eps^\alpha \lvert
  f(x_1,x_2)\rvert^2 \lvert g(\eps x_3)\rvert^2 \ ,\quad \widehat
  \rho(\xi) = \eps^{\alpha-1} F(\xi_1,\xi_2)G\(\frac{\xi_3}{\eps}\),
\end{equation*}
where $F$ and $G$ denote the Fourier transforms of $|f|^2$ and
$|g|^2$, in $\Sch(\R^2)$ and $\Sch(\R)$, respectively. We now measure
the order of magnitude, as $\eps\to 0$, of each term in the energy:

\noindent $\bullet$ Kinetic energy: the leading order term corresponds to the
differentiation with respect to $x_1$ or $x_2$.
\begin{equation*}
 \lVert\nabla \varphi \rVert_{L^2}^2 \approx  \eps^\alpha
 \int_{\R^3}\lvert\nabla f(x_1,x_2)\rvert^2 \lvert g(\eps x_3)\rvert^2
 dx\approx \eps^{\alpha-1}. 
\end{equation*}
\noindent $\bullet$ Potential energy: the leading order term corresponds
 to the $x_3$ component. 
\begin{equation*}
 \int_{\R^3}V(x) \lvert \varphi(x)\rvert^2dx\approx  \eps^\alpha
 \int_{\R^3}x_3^2\lvert f(x_1,x_2)\rvert^2 \lvert g(\eps x_3)\rvert^2
 dx\approx \eps^{\alpha-3}. 
\end{equation*}
\noindent $\bullet$ Cubic nonlinear term: 
\begin{equation*}
  \int \lvert \widehat \rho(\xi)\rvert^2d\xi =\eps^{2\alpha-2}\int
  \lvert F(\xi_1,\xi_2)\rvert^2 \left \lvert G\(\frac{\xi_3}{\eps}\)
  \right\rvert^2  d\xi \approx \eps^{2\alpha-1}.
\end{equation*}
\noindent $\bullet$ Finally we compute: 
\begin{align*}
 \int \frac{\xi_3^2}{\lvert 
  \xi\rvert^2}\lvert \widehat \rho(\xi)\rvert^2d\xi &= \eps^{2\alpha-2}
  \int \frac{\xi_3^2}{\xi_1^2+\xi_2^2+\xi_3^2}\lvert
  F(\xi_1,\xi_2)\rvert^2 \left \lvert G\(\frac{\xi_3}{\eps}\) 
  \right\rvert^2  d\xi\\
&=\eps^{2\alpha-1}
  \int \frac{\eps^2\xi_3^2}{\xi_1^2+\xi_2^2+\eps^2\xi_3^2}\lvert
  F(\xi_1,\xi_2)\rvert^2 \left \lvert G\(\xi_3\) 
  \right\rvert^2  d\xi=o\(\eps^{2\alpha-1}\),
\end{align*}
where we have used Lebesgue's Dominated Convergence Theorem. We therefore have:
\begin{equation*}
  E\approx \eps^{\alpha-1} + \eps^{\alpha-3} + \(\l_1 -\cst
  \l_2\)\eps^{2\alpha-1} + o\(\eps^{2\alpha-1}\).
\end{equation*}
If we choose $\alpha<-2$, then the leading order term is the third
one, and the lemma follows. 
\end{proof}
As a consequence we are now able to prove finite time blow-up for a
certain class of initial data. 
\begin{theorem}\label{thm:virial}
  Let $d=3$ and $ \varphi\in \Sigma$. Denote $\underline\om=\min
  \om_j$ and assume that 
  \begin{equation*}
 3  E\le \underline \om^2 \lVert x  \varphi\rVert_{L^2}^2. 
  \end{equation*}
Then the solution $\psi$ blows up in 
finite time, i.e. $T_*<\infty$  in
Corollary~\ref{cor:maximal}. More precisely, we can estimate $T_*\le
\pi/(2\underline\om)$.   
\end{theorem}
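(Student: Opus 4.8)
The plan is to use the classical virial (Glassey) argument adapted to the harmonic potential. First I would introduce the variance-type quantity
\[
y(t) = \int_{\R^3} |x|^2 |\psi(t,x)|^2\,\D x,
\]
which is well defined and smooth in $t$ as long as $\psi(t)\in\Sigma$, i.e. for $t\in[0,T_*[$. The first step is to compute $y''(t)$ via the standard virial identity. Differentiating once gives $y'(t) = 2\,\IM\int \overline{\psi}\, x\cdot\nabla\psi\,\D x$, and differentiating again and using the equation \eqref{eq:NLSgen} produces an expression of the form
\[
y''(t) = 2\lVert\nabla\psi(t)\rVert_{L^2}^2 - 4\int_{\R^3} x\cdot\nabla V_3(x)\,|\psi|^2\,\D x + (\text{nonlinear contributions}).
\]
Here I would use that $V_3$ is quadratic, so $x\cdot\nabla V_3(x) = 2V_3(x)$, and that the two nonlinearities are homogeneous: for the cubic term one gets $\tfrac{3d}{2}\lambda_1\lVert\psi\rVert_{L^4}^4 = \tfrac{9}{2}\lambda_1\lVert\psi\rVert_{L^4}^4$ in $d=3$ after integration by parts, and for the dipole term, since $K$ is homogeneous of degree $-3$ (hence $x\cdot\nabla K = -3K$ in the distributional sense, matching the scaling), one gets an analogous factor. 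The upshot, after collecting terms and invoking the conservation of energy $E$ to eliminate the quartic terms, should be an identity of the schematic form
\[
y''(t) = 12 E - 4\int_{\R^3}\bigl(2V_3(x) - \tfrac{1}{2}\sum_j\omega_j^2 x_j^2\bigr)|\psi|^2\,\D x
\]
or more precisely $y''(t) = 12E - 8\int V_3(x)|\psi|^2\D x + (\text{lower order from }V_3)$; the key point is that everything is controlled by $E$ and by $\int V_3|\psi|^2\,\D x \le \tfrac{1}{2}\overline{\omega}^2 y(t)$ with $\overline\omega = \max\omega_j$ — wait, one must be careful: we want an upper bound on $y''$, so we need a lower bound on $\int V_3|\psi|^2$, namely $\int V_3(x)|\psi|^2\,\D x \ge \tfrac{1}{2}\underline\omega^2\, y(t)$ with $\underline\omega = \min\omega_j$. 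This yields a differential inequality
\[
y''(t) \le 12E - 4\underline\omega^2\, y(t),
\]
and the hypothesis $3E \le \underline\omega^2 \lVert x\varphi\rVert_{L^2}^2 = \underline\omega^2 y(0)$ ensures that at $t=0$ the right-hand side is $\le 0$ in a suitable sense.

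The second step is to turn this ODE inequality into the finite-time blow-up conclusion. Setting $z(t) = y(t) - 3E/\underline\omega^2$, the inequality becomes $z'' \le -4\underline\omega^2 z$ — hmm, I'd rather compare with the harmonic oscillator directly: if $y'' + 4\underline\omega^2 y \le 12E$, then writing $w(t) = y(t)\cos(2\underline\omega t)$-type auxiliary functions, or more simply comparing $y$ with the solution of the equality case $\bar y'' + 4\underline\omega^2\bar y = 12E$, $\bar y(0) = y(0)$, $\bar y'(0) = y'(0)$, one sees that $\bar y(t) = \tfrac{3E}{\underline\omega^2} + A\cos(2\underline\omega t) + B\sin(2\underline\omega t)$ and by the comparison principle $y(t) \le \bar y(t)$ as long as both are defined. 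The hypothesis $3E/\underline\omega^2 \le y(0)$ forces the amplitude of the oscillation to be large enough — or rather, it forces $\bar y$ to reach $0$ before time $\pi/(2\underline\omega)$. Indeed, since $\bar y$ oscillates about the level $3E/\underline\omega^2 \le y(0) = \bar y(0)$ with period $\pi/\underline\omega$, within a half-period $\pi/(2\underline\omega)$ it must dip down to its minimum, which lies at or below $2\cdot\tfrac{3E}{\underline\omega^2} - y(0) \le y(0) - \tfrac{}{}$ — more carefully, the minimum of $\bar y$ is $\tfrac{3E}{\underline\omega^2} - \sqrt{A^2+B^2}$ and one checks $\sqrt{A^2+B^2} \ge |A| = |y(0) - \tfrac{3E}{\underline\omega^2}| = y(0) - \tfrac{3E}{\underline\omega^2} \ge 0$, so the minimum is $\le \tfrac{3E}{\underline\omega^2} - (y(0) - \tfrac{3E}{\underline\omega^2})$, and this is reached at some time in $]0,\pi/(2\underline\omega)]$. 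But $y(t)\ge 0$ always, so $y$ cannot follow $\bar y$ all the way down; since $y \le \bar y$, if $\bar y$ becomes negative at some $t_0 \le \pi/(2\underline\omega)$ then $\psi$ cannot exist up to $t_0$, i.e. $T_* \le t_0 \le \pi/(2\underline\omega)$. The blow-up of $\lVert\nabla\psi(t)\rVert_{L^2}$ as $t\to T_*$ then follows from Corollary~\ref{cor:maximal}.

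I expect the main obstacle to be the rigorous justification of the virial identity itself, specifically handling the nonlocal dipole term $\lambda_2(K\ast|\psi|^2)\psi$. Because $K$ is a singular Calderón–Zygmund kernel, one cannot naively differentiate under the integral or integrate by parts against $x$; the cleanest route is to pass to Fourier variables, where $\widehat K$ is the bounded multiplier from Lemma~\ref{lem:FourierK}, which is homogeneous of degree $0$ in $\xi$, so that $\xi\cdot\nabla_\xi \widehat K = 0$ — this homogeneity is exactly what makes the dipole contribution to $y''$ scale the same way as the cubic one. I would therefore express $\int(K\ast|\psi|^2)|\psi|^2\,\D x = \tfrac{1}{(2\pi)^3}\int\widehat K(\xi)|\widehat\rho(\xi)|^2\,\D\xi$ as in Lemma~\ref{lem:negative} and carry out the virial computation on this Fourier-side representation. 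A secondary technical point, standard in this context, is that the virial identity is first derived for smooth decaying solutions and then extended to $\Sigma$-solutions by a density/regularization argument (or by truncating $|x|^2$ with a cutoff $|x|^2\chi(x/R)$ and letting $R\to\infty$, which is how one usually avoids assuming extra decay beyond $x\psi\in L^2$); I would invoke the general framework of \cite{CazCourant} for this, noting that the only new term, the dipole term, is $L^2$-bounded and enters the energy and virial identities exactly like the cubic term up to the explicit constant $\tfrac{4\pi}{3}$ coming from $\widehat K$.
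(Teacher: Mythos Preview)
Your strategy is exactly the paper's: virial identity for $y(t)=\int|x|^2|\psi|^2\,\D x$, handle the dipole term on the Fourier side via the homogeneity $\xi\cdot\nabla\widehat K=0$, derive a forced harmonic oscillator inequality $\ddot y+4\underline\omega^2 y\le \text{const}$, and reach a contradiction at $t=\pi/(2\underline\omega)$. You also correctly anticipate that the only new point compared to the classical Glassey argument is the nonlocal term, and that homogeneity of degree~$0$ of $\widehat K$ is what makes it behave like the cubic one.

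There is, however, a concrete numerical gap. The correct outcome of the virial computation (after using Lemma~\ref{lem:FourierK}, so that $\widehat{x\cdot\nabla K}=-3\widehat K$) is
\[
\ddot y(t)=4E+\frac{1}{(2\pi)^3}\int_{\R^3}\bigl(\l_1+\l_2\widehat K(\xi)\bigr)|\widehat\rho(t,\xi)|^2\,\D\xi-8\int_{\R^3}V(x)|\psi(t,x)|^2\,\D x,
\]
and substituting the energy identity for the nonlinear terms gives $\ddot y+4\underline\omega^2 y\le 6E$, not $12E$. Your constant $12E$ produces the comparison equation $\bar y''+4\underline\omega^2\bar y=12E$ with equilibrium level $3E/\underline\omega^2$, and then the value $\bar y(\pi/(2\underline\omega))=6E/\underline\omega^2-y(0)$; this is $\le 0$ only under the stronger hypothesis $6E\le\underline\omega^2\lVert x\varphi\rVert_{L^2}^2$, so your argument does not close under the stated assumption $3E\le\underline\omega^2\lVert x\varphi\rVert_{L^2}^2$. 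With the correct bound $6E$ the equilibrium level is $3E/(2\underline\omega^2)$ and $\bar y(\pi/(2\underline\omega))=3E/\underline\omega^2-y(0)\le 0$, as required.

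Two smaller points. First, the ``comparison principle'' you invoke is not automatic for second-order equations; what actually justifies $y\le\bar y$ on $[0,\pi/(2\underline\omega)]$ is the Duhamel representation, since the Green kernel $\sin(2\underline\omega(t-s))/(2\underline\omega)$ is nonnegative precisely on that interval (this is how the paper proceeds). Second, tracking the minimum of $\bar y$ is unnecessary and slightly misleading: the minimum need not lie in $[0,\pi/(2\underline\omega)]$, but the value at the endpoint $t=\pi/(2\underline\omega)$ already gives the contradiction.
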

  From Lemma~\ref{lem:negative} we know that for $\l_1<\frac{4\pi}{3} \l_2$ 
  we can always choose initial data such that $E<0$ and thus enforce
  finite time blow-up.   
  In other words, blow-up may occur even in situations where we have a
  defocusing (local) cubic nonlinearity  
  ($\l_1>0$) and a positive coupling constant $\l_2>0$ (the 
  physical case), provided that $\l_1<\frac{4\pi}{3} \l_2$ holds true
  (and the initial energy is sufficiently small).  
  Recalling the well known fact that when
  $\l_1>0$ and $\l_2=0$, finite time blow-up cannot occur
  (i.e. $T_*=\infty$), this shows that the presence of the dipole-term
  may indeed cause collapse of the Bose--Einstein condensate. 

\begin{proof}
  The proof is based on the virial computation. Set 
  \begin{equation*}
    y(t) = \int_{\R^3} \lvert x\rvert^2 \lvert \psi(t,x)\rvert^2\D x. 
  \end{equation*}
Then, following \cite{CazCourant} (with slightly different notations), we have,
since $K$ is even:
\begin{align*}
  \ddot y(t)= &\ 4E + \l_1\|\psi(t)\|_{L^4}^4 -8\int_{\R^3} V(x)\lvert
  \psi(t,x)\rvert^2\D x \\
&-2\l_2\int_{\R^3} \(\(K+\frac{1}{2}x\cdot \nabla
  K\)\ast\lvert \psi\rvert^2\)(t,x)\lvert \psi(t,x)\rvert^2\D x.
\end{align*}
We again use Plancherel's formula with $\rho =\lvert\psi\rvert^2$ to write
\begin{align*}
\int\(\(K+\frac{1}{2}x\cdot \nabla
  K\)\ast \rho\)(t,x) \rho(t,x)\D x= \frac{1}{(2\pi)^3} \int \(\widehat
  K +\frac{1}{2}\widehat{x\cdot \nabla K}\) \lvert\widehat
  \rho(\xi)\rvert^2\D \xi
\end{align*}
and compute, for $d=3$, 
\begin{align*}
  \widehat{x\cdot \nabla K}(\xi) = -\DIV \(\xi \widehat K(\xi)\) =
  -3\widehat K(\xi) -\xi\cdot \nabla\widehat K(\xi). 
\end{align*}
From Lemma~\ref{lem:FourierK}, we infer $ \xi\cdot \nabla\widehat
K(\xi)=0$ and thus
\begin{equation*}
  \ddot y(t)=  4E + \frac{\l_1}{(2\pi)^3}\left\lVert\widehat
  \rho(t)\right\rVert_{L^2}^2 -8\int_{\R^3} V(x)\lvert 
  \psi(t,x)\rvert^2\D x +\frac{\l_2}{(2\pi)^3}\int_{\R^3} \widehat K(\xi)
  \lvert\widehat \rho(\xi)\rvert^2\D \xi.
\end{equation*}
This can be rewritten as
\begin{equation*}
  \ddot y(t)+4\underline \om^2 y(t) =  f(t),
\end{equation*}
with  $\underline\om=\min
  \om_j$ and
\begin{equation*}
f(t)=4E -4\sum_{j=1}^3
  \(\om_j^2-\underline \om^2\)\int_{\R^3} x_j^2\lvert \psi(t,x)\rvert^2\D x
 +\frac{1}{(2\pi)^3}\int_{\R^3}
  \(\l_1 + \l_2 \widehat K(\xi)\) \lvert\widehat
  \rho(\xi)\rvert^2\D \xi. 
\end{equation*}
Recalling the energy as written as in \eqref{eq:encomp}, we first note
that the source term $f(t)$ can be estimated via 
\begin{align*}
  f(t)&\le 4E +\frac{1}{(2\pi)^3}\int_{\R^3}
  \(\l_1 + \l_2 \widehat K(\xi)\) \lvert\widehat
  \rho(\xi)\rvert^2\D \xi \\
&\le 6E -  \lVert\nabla \psi(t)\rVert_{L^2}^2
-2\int_{\R^3}  V(x) \lvert \psi(t,x)\rvert^2\D x \le 6E. 
\end{align*}
On the other hand we have
\begin{equation*}
  y(t)=y(0) \cos\(2\underline \om t\) +\dot y(0)\frac{\sin
  \(2\underline \om t\)}{2\underline \om} +\int_0^t \frac{\sin
  \(2\underline \om (t-s)\)}{2\underline \om}f(s)\, \D s. 
\end{equation*}
Suppose now that $T_*>\pi/(2\underline \om)$. Then we can consider
$t=\pi/(2\underline \om)$ in the above relation. This yields, since
the sine function in the last integral remains non-negative on
$[0,t]$:
\begin{equation*}
  y\(\frac{\pi}{2\underline \om}\) \le -y(0) +6E
  \int_0^{\pi/(2\underline \om)} \frac{\sin 
  \(\pi -2\underline \om s\)}{2\underline
  \om}\, \D s=-y(0)+\frac{3E}{\underline \om^2}. 
\end{equation*}
We note that the left hand side must be positive, since
$\psi(t,\cdot)\in \Sigma$. By assumption, the right hand side is
non-positive. This yields a contradiction and hence the result. 
\end{proof}

\begin{remark}
To shed more light on the conditions for blow-up, 
consider the case of an isotropic trapping potential: $\om_j =\om$
  for $1\le j\le 3$. The condition of Theorem~\ref{thm:virial}
  then reads
  \begin{equation*}
\frac{3}{2}\lVert\nabla  \varphi \rVert_{L^2}^2
+\frac{\om^2}{2}\lVert x  \varphi \rVert_{L^2}^2 + \frac{3\l_1}{2}\lVert
 \varphi\rVert_{L^4}^4 
+ \frac{3\l_2}{2} \int_{\R^3} \(K_d\ast
\lvert \varphi\rvert^2\)(x) \lvert  \varphi(x)\rvert^2\D x \le 0.    
  \end{equation*}
In view of Plancherel's formula, this is equivalent to
\begin{equation*}
3\lVert\nabla  \varphi \rVert_{L^2}^2
+\om^2\lVert  x  \varphi \rVert_{L^2}^2 + \frac{3}{(2\pi)^3}\int_{\R^3}
\(\l_1 +\l_2 
\widehat K(\xi)\) \lvert \widehat \rho(\xi)\rvert^2\D \xi\le 0.    
  \end{equation*}
From Lemma~\ref{lem:FourierK}, when $\l_2>0$ (the physical case), this
is possible only if $\l_1<\frac{4\pi}{3} \l_2$. 
\end{remark} 

We already know that when  
$\l_1\ge \frac{4\pi}{3} \l_2$, then finite time blow-up cannot
occur. On the other hand we have just seen  
that blow-up occurs in the case $\l_1<\frac{4\pi}{3} \l_2$,
provided the initial energy is sufficiently small, say  
non-positive. What remains open therefore  
is the case of (large) positive initial energy in the unstable
regime. Unfortunately we can only give a partial  
answer to that.

\subsection{Global existence for $d=3$ (unstable regime)}

Here we shall show that global in time existence is possible in the
case  $\l_1<\frac{4\pi}{3} \l_2$ under some additional assumptions. 

\begin{proposition}
Under the same assumptions as in Proposition~\ref{prop:localSigma},
  $d=3$, suppose that in 
  addition $\l_1< \frac{4\pi}{3} \l_2$ and $\l_2\ge 0$.  Then there
  exists a $C_0>0$,  
  independent of $\l_1$ and $\l_2$, and  a (positive) constant 
   \begin{equation*}
    \widetilde C_0 =  \frac{C_0}{M\(\frac{4\pi}{3} \l_2-\l_1\)^2},
    \end{equation*}
  such that, if $ 0<E < \widetilde C_0$ and 
  $\lVert \nabla  \varphi\rVert_{L^2}^2
    < \widetilde C_0$, then $T_*=\infty$. 
\end{proposition}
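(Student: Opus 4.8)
The plan is to derive an \emph{a priori} bound on $\|\nabla\psi(t)\|_{L^2}$ from the conservation of mass and energy, and then to close it by a continuity argument, so that Corollary~\ref{cor:maximal} forces $T_*=\infty$. First I would use that $V_3\ge0$ together with the conservation of $E$ (Proposition~\ref{prop:localSigma}) to write, for every $t\in[0,T_*)$,
\begin{equation*}
  \tfrac12\|\nabla\psi(t)\|_{L^2}^2 \le E - \tfrac{\l_1}{2}\|\psi(t)\|_{L^4}^4 - \tfrac{\l_2}{2}\int_{\R^3}\(K\ast|\psi|^2\)(t,x)\,|\psi(t,x)|^2\,\D x .
\end{equation*}
Next, exactly as in the proofs of Lemma~\ref{lem:negative} and Theorem~\ref{thm:global3DSigma-1}, I would pass to the Fourier side: with $\rho=|\psi|^2$ and Plancherel's formula the two nonlinear terms combine into $\frac{1}{(2\pi)^3}\int(\l_1+\l_2\widehat K(\xi))|\widehat\rho(\xi)|^2\,\D\xi$. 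Since $\widehat K(\xi)=\cst(3\cos^2\Theta-1)\ge-\cst$ by Lemma~\ref{lem:FourierK} and $\l_2\ge0$, one has $\l_1+\l_2\widehat K(\xi)\ge\l_1-\cst\l_2=:-\mu$ with $\mu=\cst\l_2-\l_1>0$, whence
\begin{equation*}
  -\tfrac{\l_1}{2}\|\psi(t)\|_{L^4}^4 - \tfrac{\l_2}{2}\int_{\R^3}\(K\ast|\psi|^2\)(t,x)\,|\psi(t,x)|^2\,\D x \le \tfrac{\mu}{2(2\pi)^3}\|\widehat\rho(t)\|_{L^2}^2 = \tfrac{\mu}{2}\|\psi(t)\|_{L^4}^4 .
\end{equation*}
Feeding in the Gagliardo--Nirenberg inequality $\|u\|_{L^4}^4\le C\|u\|_{L^2}\|\nabla u\|_{L^2}^3$ (valid in $d=3$) and the conservation of mass $\|\psi(t)\|_{L^2}^2=M$, and writing $y(t)=\|\nabla\psi(t)\|_{L^2}^2$, I obtain the scalar inequality
\begin{equation*}
  y(t) \le 2E + C\mu\sqrt{M}\, y(t)^{3/2}, \qquad 0\le t<T_* .
\end{equation*}

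The second step is a bootstrap on the function $h(y)=y-C\mu\sqrt M\,y^{3/2}$ on $[0,\infty)$: it vanishes at $0$, increases to its unique maximum at $y_*=\tfrac{4}{9C^2\mu^2M}$, where $h(y_*)=\tfrac{4}{27C^2\mu^2M}$, and then strictly decreases to $-\infty$. Set $C_0=\tfrac{2}{27C^2}$, which depends only on the Gagliardo--Nirenberg constant, hence not on $\l_1,\l_2$, and $\widetilde C_0=C_0/(M\mu^2)=C_0/\bigl(M(\cst\l_2-\l_1)^2\bigr)$. If $0<E<\widetilde C_0$ and $\|\nabla\varphi\|_{L^2}^2<\widetilde C_0$, then $2E<h(y_*)$ and $y(0)=\|\nabla\varphi\|_{L^2}^2<y_*$, so the sub-level set $\{\,y\ge0 : h(y)\le 2E\,\}$ is the \emph{disjoint} union $[0,y_-]\cup[y_+,\infty)$ with $0<y_-<y_*<y_+$. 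The scalar inequality above says $y(t)$ belongs to this set for every $t\in[0,T_*)$; since $t\mapsto y(t)$ is continuous on $[0,T_*)$ (because $\psi\in C([0,T_*);\Sigma)\hookrightarrow C([0,T_*);H^1)$ by Corollary~\ref{cor:maximal}) and $y(0)<y_*<y_+$, the image $\{y(t):0\le t<T_*\}$, being connected, cannot reach the component $[y_+,\infty)$. Therefore $y(t)\le y_-$ for all $t<T_*$, i.e.\ $\|\nabla\psi(t)\|_{L^2}$ stays bounded, and Corollary~\ref{cor:maximal} yields $T_*=\infty$.

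The routine parts here are the Fourier rewriting of the interaction energy (already carried out in Lemma~\ref{lem:negative}) and the elementary calculus describing $h$ and pinning down the admissible value of $C_0$. The only point needing genuine care is the topological step: one must verify that under the two smallness conditions the sub-level set of $h$ is indeed disconnected and that the initial datum lands on the \emph{compact} component $[0,y_-]$. This is also where the separate hypothesis on $\|\nabla\varphi\|_{L^2}$ is essential and cannot be dropped: in the unstable regime the nonlinear part of $E$ may be very negative, so a small value of $E$ alone does not control $\|\nabla\varphi\|_{L^2}$, and without it one could not guarantee that $y(0)$ starts below $y_+$.
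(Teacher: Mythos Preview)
Your proof is correct and follows essentially the same route as the paper: derive the scalar inequality $y(t)\le 2E + C\mu\sqrt{M}\,y(t)^{3/2}$ from energy conservation, the Fourier lower bound $\widehat K\ge -\cst$, and Gagliardo--Nirenberg, then close by a bootstrap/continuity argument. The paper packages the last step as an abstract bootstrap lemma (applied with $\theta=3/2$), whereas you carry out the equivalent analysis of $h(y)=y-C\mu\sqrt M\,y^{3/2}$ by hand, but the content is the same.
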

Of course, this result can only give an additional insight in
situations where $\underline \om^2 \lVert x  \varphi\rVert_{L^2}^2 <
3\widetilde C_0$.  
The picture becomes a bit clearer, though, if one ignores the harmonic
confinement for a moment, i.e. set $\om_j = 0$. Then we know  
that in the unstable regime  $\l_1<\frac{4\pi}{3} \l_2$ blow-up occurs
as soon as the total initial energy is non-positive. On the  
other hand, if the total initial energy is positive but not too large,
i.e. smaller than $\widetilde C_0$,  
then global in time existence still holds, provided the initial
kinetic energy is also smaller than $\widetilde C_0$. 
\begin{proof}
 We resume the same approach as in the proof of Theorem
 \ref{thm:global3DSigma-1}, from which we now get  
 \begin{equation*}
   \lVert \nabla \psi(t)\rVert_{L^2}^2\le 2E +
    \frac{1}{(2\pi)^3}\int_{\R^3} \(\frac{4\pi}{3}\l_2- \l_1  \)\lvert  
    \widehat \rho(\xi)\rvert^2 \D \xi=2E +\(\frac{4\pi}{3}\l_2- \l_1  \) \lVert
    \psi(t)\rVert_{L^4}^4. 
 \end{equation*}
The Gagliardo--Nirenberg inequality then yields
\begin{equation}\label{eq:17h38}
\lVert \nabla \psi(t)\rVert_{L^2}^2\le 2E + C\(\frac{4\pi}{3}\l_2-
    \l_1  \)\lVert 
    \psi(t)\rVert_{L^2} \lVert \nabla \psi(t)\rVert_{L^2}^3.     
\end{equation}
On the other hand, the conservation of mass implies:
\begin{equation*}
  \lVert \nabla \psi(t)\rVert_{L^2}^2\le 2E +
    C\(\frac{4\pi}{3}\l_2- \l_1\)\sqrt M \lVert \nabla \psi(t)\rVert_{L^2}^3.
\end{equation*}
With these two estimates in hand, the result follows from a
bootstrap argument (see e.g. \cite{BG}): 
\begin{lemma}[Bootstrap argument]\label{lem:boot}
Let $f=f(t)$ be a nonnegative continuous function on $[0,T]$ such
that, for every $t\in [0,T]$, 
\begin{equation*}
  f(t)\le \eps_1  + \eps_2 f(t)^\theta,
\end{equation*}
where $\eps_1,\eps_2>0$ and $\theta >1$ are constants such that
\begin{equation*}
  \eps_1 <\left(1-\frac{1}{\theta} \right)\frac{1}{(\theta \eps_2)^{1/(\theta
-1)}}\ ,\ \ \ f(0)\le  \frac{1}{(\theta \eps_2)^{1/(\theta
-1)}}.
\end{equation*}
Then, for every $t\in [0,T]$, we have
\begin{equation*}
  f(t)\le \frac{\theta}{\theta -1}\ \eps_1.
\end{equation*}
\end{lemma}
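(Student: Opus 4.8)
The plan is to reduce the whole statement to elementary facts about the scalar function $g(x):=x-\eps_2 x^\theta$ on $[0,\infty)$. First I would rewrite the hypothesis $f(t)\le \eps_1+\eps_2 f(t)^\theta$ as $g(f(t))\le \eps_1$ for every $t\in[0,T]$. Since $\theta>1$ and $\eps_2>0$, the function $g$ satisfies $g(0)=0$, is strictly increasing on $[0,x_*]$ and strictly decreasing on $[x_*,\infty)$, where $x_*:=(\theta\eps_2)^{-1/(\theta-1)}$ is its unique critical point, and $g(x)\to-\infty$ as $x\to\infty$. A one-line computation using $\eps_2 x_*^{\theta-1}=1/\theta$ gives $g(x_*)=(1-1/\theta)\,x_*$, so the first standing assumption on $\eps_1$ is exactly the statement $\eps_1<g(x_*)$. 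Consequently the sublevel set $\{x\ge 0:\ g(x)\le\eps_1\}$ is the disjoint union $[0,a_1]\cup[a_2,\infty)$ with $0<a_1<x_*<a_2$ and $g(a_1)=g(a_2)=\eps_1$.

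Next I would pin down the initial value. Evaluating the hypothesis at $t=0$ gives $g(f(0))\le\eps_1$, so $f(0)\in[0,a_1]\cup[a_2,\infty)$; combined with the second assumption $f(0)\le x_*<a_2$ this forces $f(0)\in[0,a_1]$. Now comes the trapping step: $f$ is continuous on $[0,T]$ and, by the description of the sublevel set, $f(t)$ can never lie in the open interval $(a_1,a_2)$, on which $g>\eps_1$. If one had $f(t_1)\ge a_2$ for some $t_1\in(0,T]$, the intermediate value theorem applied to $f$ on $[0,t_1]$ would produce a time at which $f$ takes a value strictly between $a_1$ and $a_2$, a contradiction. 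Hence $f(t)\le a_1$ for every $t\in[0,T]$.

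It then remains to bound $a_1$ explicitly. Since $a_1<x_*$ we have $\eps_2 a_1^{\theta-1}\le \eps_2 x_*^{\theta-1}=1/\theta$, hence $\eps_2 a_1^{\theta}\le a_1/\theta$; inserting this into the identity $a_1=\eps_1+\eps_2 a_1^{\theta}$ gives $(1-1/\theta)\,a_1\le\eps_1$, i.e. $a_1\le \tfrac{\theta}{\theta-1}\eps_1$. Together with $f(t)\le a_1$ this yields the claimed bound $f(t)\le \tfrac{\theta}{\theta-1}\eps_1$.

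I do not anticipate any real obstacle: the argument is soft. The only point deserving a little care is the qualitative analysis of the graph of $g$ — its monotonicity on the two intervals and the resulting two-piece structure of $\{g\le\eps_1\}$ — together with the verification that $f(0)$ sits on the lower piece $[0,a_1]$, which is precisely what makes the continuity/connectedness trapping argument run.
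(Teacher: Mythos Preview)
Your argument is correct and is the standard proof of this bootstrap lemma. Note, however, that the paper does not actually give its own proof of this statement: it simply states the lemma inside the proof of the proposition in \S5.2, with a reference (``see e.g.\ \cite{BG}''), and then applies it. So there is no ``paper's proof'' to compare against; you have supplied a complete argument where the paper only quotes the result.
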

We apply the lemma with $f(t)=\lVert \nabla \psi(t)\rVert_{L^2}^2$,
$\eps_1=2E$,  $\eps_2 = C\(\frac{4\pi}{3}\l_2- \l_1\)\sqrt M$ and
$\theta=3/2$. Note that we therefore have to assume that the energy is
positive. Otherwise, \eqref{eq:17h38} yields no information any way,
and Theorem~\ref{thm:virial} shows that finite time blow-up
occurs.  
\end{proof}

\section{Dimension reduction}
\label{sec:reduction}

We shall mainly follow the ideas of \cite{BMSW} where such an analysis
has been rigorously performed for $\l_2 =0$ (no dipole effects) in the  
case of modulated ground state intial data. We also remark that the case of
general initial data has been treated in the  
remarkable paper \cite{BCM}. In our case, due to the presence of the
non-isotropic dipole-kernel, we can distinguish two main  
cases: The reduction from $d=3$ to an effective one-dimensional model
in the dipole direction $n=(0,0,1)$, and the reduction to an effective 
two-dimensional model perpendicular to $n$. We shall
sketch the adaptation of the approach in \cite{BMSW} to the present
context. It is very likely that 
the analysis of \cite{BCM} can be adapted to the dipole case, but
we shall not pursue this question as it is beyond the scope of our work.

\subsection{Formal derivation of the one-dimensional model} 
\label{sec:conftodipole}
Let us start with the first problem of deriving an effective
one-dimensional model in the dipole direction. To this end  
we write the general model \eqref{eq:NLSgen} in the form
\begin{equation}\label{eq:conf1}
  i\d_t \psi +\frac{1}{2}\Delta \psi = \frac{1}{\eps^4}\(\om_1^2
  \frac{x_1^2}{2} +\om_2^2
  \frac{x_2^2}{2}\)\psi +\om_3^2\frac{x_3^2}{2}\psi+
  \l_1\lvert\psi\rvert^2\psi +\l_2 \(K\ast 
  \lvert\psi\rvert^2\)\psi.
\end{equation}
The parameter $\eps$ is positive and small, its smallness modeling a
strong confinement in the first two directions. 
Introduce a change of variables via
\begin{equation*}
  \psi(t,x_1,x_2,x_3) =  \psi^\eps
  \(t,\frac{x_1}{\eps},\frac{x_2}{\eps},x_3\). 
\end{equation*}
Then equation~\eqref{eq:conf1} is equivalent to:
\begin{equation}\label{eq:conf2}
  i\d_t  \psi^\eps +\frac{1}{2}\frac{{\d}^2}{\d x_3^2} \, \psi^\eps
  = \frac{1}{\eps^2}H_2 \psi^\eps
  +\om_3^2\frac{x_3^2}{2}\psi^\eps+ 
  \l_1\lvert \psi^\eps\rvert^2 \, \psi^\eps +\l_2
 {\mathcal K}^\eps  \psi^\eps,
\end{equation}
where $H_2$ is the 
two-dimensional harmonic oscillator 
\begin{equation*}
  H_2 = -\frac{1}{2}\(\frac{{\d}^2}{\d x_1^2} +\frac{{\d}^2}{\d
  x_2^2}\) + \om_1^2 
  \frac{x_1^2}{2} +\om_2^2
  \frac{x_2^2}{2},
\end{equation*}
and ${\mathcal K}^\eps $ is the convolution operator
\begin{equation*}
{\mathcal K}^\eps (t,x) = \int_{\R^3} \frac{(\eps x_1-y_1)^2+(\eps
 x_2-y_2)^2-2(x_3-y_3)^2}{\((\eps x_1-y_1)^2+(\eps 
 x_2-y_2)^2+(x_3-y_3)^2\)^{5/2}}  \, \lvert
 \psi^\eps(t,y)\rvert^2 \, \D y.
\end{equation*}
For $\ell \in \N$, we shall denote by $\chi_\ell (x_1,x_2)$ the
(normalized) eigenfunction of $H_2$ corresponding to the  
eigenvalue $\mu_\ell \in \R_+$. In particular the ground state
$\chi_0$ corresponds to  
$\mu_0 = \frac{1}{2}(\om_1+\om_2)$ and is explicitly given by 
\begin{equation*}
  \chi_0(x_1,x_2) = \frac{\sqrt{\om_1\om_2}}{\pi} \, e^{-(\om_1
  x_1^2+\om_2x_2^2)/2}. 
\end{equation*}
As in \cite{BMSW}, we consequently seek a solution to \eqref{eq:conf2} in 
the form
\begin{equation}\label{eq:ansatz}
  \psi^\eps(t,x_1,x_2,x_3) = e^{-i \mu_0
  t/\eps^2}\chi_0(x_1, x_2)u(t,x_3). 
\end{equation}
This type of ansatz obviously requires well-prepared initial data (i.e. concentrated on the ground state of $H_2$).
A formal multiple scales expansion for $ \psi^\eps$ as $ \eps \to 0 $
then yields the consistency relation 
\begin{equation*}
  (H_2  - \mu_0) \psi^\eps=0,
\end{equation*}
plus an evolution equation for the modulation $u(t,x)$, given by
\begin{equation}\label{eq:reduc1D}
  i\d_t u +\frac{1}{2}\frac{{\d}^2}{\d x_3^2} u  = \om_3^2\frac{x_3^2}{2}u+
  \kappa_1\lvert u\rvert^2 u +\l_2 \(K_1\ast 
  \lvert u\rvert^2\)u,
\end{equation}
Here we denote by 
\begin{equation*}
\kappa_1 = \l_1\int_{\R^2} \chi_0^4(x_1, x_2) \, \D x_1 \D x_2,
\end{equation*} 
the effective coupling constant for the cubic nonlinearity and by 
\begin{equation}\label{eq:K1}
  K_1(x_3) = \int_{\R^2} \frac{x_1^2+x_2^2-2x_3^2
  }{\(x_1^2+x_2^2+x_3^2\)^{5/2}} \, \chi_0(x_1,x_2)^2
  \D x_1\D x_2. 
\end{equation}
the corresponding effective one-dimensional dipole kernel. To apply the
existence analysis  
presented in \S\ref{sec:global}, it is important to check if
the operator ${\mathcal K}_1:u\mapsto K_1\ast  
  u$ is bounded on $L^2(\R)$. To do so we shall show
  that the one-dimensional Fourier transform of  
  $K_1$ is in $L^\infty(\R)$ . Recall
  \begin{equation*}
    K(x_1,x_2,x_3)= \frac{x_1^2+x_2^2-2x_3 ^2
  }{\(x_1^2+x_2^2+x_3^2\)^{5/2}}.
  \end{equation*}
Then, denoting by $\F_2$ the partial Fourier transform with respect to the
variables $x_1$ and $x_2$ only, we obtain from Plancherel's formula on
$\R^2$, that 
  \begin{align*}
    \widehat K_1(\xi_3) &=\int_{\R} e^{-ix_3\xi_3}
    K_1(x_3) \, \D x_3 
    \\
&= \int_{\R^3} e^{-ix_3\xi_3}K(x_1,x_2,x_3)
    \overline{\chi_0(x_1,x_2)^2} \, \D x_1\D x_2 \D x_3\\
&= \frac{1}{(2\pi)^2}\int_{\R^3} e^{-ix_3\xi_3} \F_{2}K(\xi_1,\xi_2,x_3)
    \overline{\F_{2} \chi_0^2}(\xi_1,\xi_2) \, \D \xi_1\D \xi_2
    \D x_3\\ 
&= \frac{1}{(2\pi)^2}\int_{\R^2}\overline{\F_{2} \chi_0^2}(\xi_1,\xi_2)
    \, \D \xi_1\D \xi_2\int_{\R}e^{-ix_3\xi_3}
    \F_{2}K(\xi_1,\xi_2,x_3) \, \D x_3\\ 
&= \frac{1}{(2\pi)^2}\int_{\R^3}\overline{\F_{2} \chi_0^2}(\xi_1,\xi_2)
    \widehat K(\xi_1,\xi_2,\xi_3) \, \D \xi_1\D \xi_2 \D \xi_3,
  \end{align*}
where $\widehat K$ denotes the three-dimensional Fourier transform of
the kernel 
$K$. Since $\chi_0$ is a Schwartz function, the Fourier transform of
$\chi_0^2$ is a Schwartz function, thereby in $L^1(\R)$. Consequently, the
boundedness of 
$\widehat K$ on $\R^3$ (which stems from Lemma~\ref{lem:CZ}, or more
explicitly from Lemma~\ref{lem:FourierK}) implies the boundedness of
$\widehat K_1$ on $\R$.  

The assumptions of
Proposition~\ref{prop:localSigma} are thus satisfied. From
Corollary~\ref{cor:global1DSigma}, we know that the effective
one-dimensional model \eqref{eq:reduc1D} has a  
unique, global solution in $\Sigma$. Moreover, the same holds true if
$\Sigma$ is 
replaced by $L^2(\R)$, as proved in Theorem~\ref{thm:globalL2}.

\subsection{Formal derivation of the two-dimensional model}
\label{sec:confdirdipole}
We proceed as before. Starting from 
\begin{equation}\label{eq:conf12}
  i\d_t \psi +\frac{1}{2}\Delta \psi = \(\om_1^2
  \frac{x_1^2}{2} +\om_2^2
  \frac{x_2^2}{2}\)\psi +\frac{\om_3^2}{\eps^4}\frac{x_3^2}{2}\psi+
  \l_1\lvert\psi\rvert^2\psi +\l_2 \(K\ast 
  \lvert\psi\rvert^2\)\psi,
\end{equation}
we introduce a
new change of variables by  
\begin{equation*}
  \psi(t,x_1,x_2,x_3) =  \widetilde \psi^\eps
  \(t,x_1,x_2,\frac{x_3}{\eps}\). 
\end{equation*}
Equation~\eqref{eq:conf12} then becomes
\begin{equation*}
  i\d_t  \widetilde \psi^\eps +\frac{1}{2}\(\frac{{\d}^2}{\d x_1^2}
  +\frac{{\d}^2}{\d x_2^2}\) \widetilde \psi^\eps 
  = \frac{1}{\eps^2}H_1 \widetilde \psi^\eps
  +\frac{1}{2} \left(\om_1^2 x_1^2 + \om_2^2 x_2^2 \right)\widetilde
  \psi^\eps+  
  \l_1\lvert \widetilde \psi^\eps\rvert^2 \, \widetilde \psi^\eps +\l_2
 \widetilde {\mathcal K}^\eps  \widetilde \psi^\eps,
\end{equation*}
where $H_1$ now denotes the one-dimensional harmonic oscillator,
acting in the dipole direction 
\begin{equation*}
  H_1 = -\frac{{\d}^2}{\d x_3^2} + \om_3^2
  \frac{x_3^2}{2}.
\end{equation*}
Denoting the corresponding ground state by $\tilde \chi_0(x_3)$ and
proceeding analogously as before we arrive,  
instead of \eqref{eq:conf2}, at the 
following effective equation for the modulation $\widetilde u(t,x_1,x_2)$: 
\begin{equation}\label{eq:reduc2D}
  i\d_t  \widetilde u +\frac{1}{2}\(\frac{{\d}^2}{\d x_1^2}
  +\frac{{\d}^2}{\d x_2^2}\) \widetilde u 
  = \frac{1}{2} \left(\om_1^2 x_1^2 + \om_2^2 x_2^2 \right)\widetilde u+ 
  \widetilde \kappa_1\lvert \widetilde u \rvert^2 \, \widetilde u +\l_2
 K_2  \widetilde u,
\end{equation}
where now
\begin{equation*}
\widetilde \kappa_1 = \l_1\int_{\R} \widetilde \chi_0^4(x_3) \, \D x_3,
\end{equation*} 
is the new coupling constant and
\begin{equation*}
  K_2(x_1,x_2)= \int_{\R} \frac{x_1^2+x_2^2-2x_3^2
  }{\(x_1^2+x_2^2+x_3^2\)^{5/2}} \, \chi_0(x_3)^2 \, \D x_3.
\end{equation*}
is the effective two-dimensional dipole kernel. 
By the same computation as above, we see that $\F_2 K_2 \in
L^\infty(\R^2)$, and thus the assumptions of
Proposition~\ref{prop:localSigma} are satisfied to guarantee a well-posed 
initial value problem in $\Sigma$, at least 
locally in time.

\subsection{A rigorous result} 

For completeness we shall finally state a rigorous mathematical 
result and sketch the corresponding proof which 
follows the lines of \cite{BMSW}. We only consider the case $\l_1\ge
\frac{4\pi}{3} \l_2\ge 0$  
(the stable regime), 
which allows for a somewhat shorter argument and again refer to
\cite{BMSW, BCM} for more general statements.  
Moreover, for notational convenience, we shall state the result only
for the case of   
\eqref{eq:reduc1D} (one-dimensional model in dipole direction), 
but exactly the same approach can be followed for justifying
\eqref{eq:reduc2D}. We consider initial data for $\psi^\eps$ which do
not depend on $\eps$, and refer to \cite{BMSW, BCM} for a discussion
on this assumption. 
\begin{theorem}  Let $d=3$, $V_d$ be quadratic and $\l_1\ge
  \frac{4\pi}{3} \l_2\ge 0$. Denote by $\psi^\eps(t) \in \Sigma$  
and $u(t)\in \Sigma$ the unique solutions of \eqref{eq:conf2} and
\eqref{eq:reduc1D}, respectively, with
\begin{equation*}
  \psi^\eps\big |_{t=0}= \chi_0(x_1,x_2)u_0(x_3)\ ,\quad
  u\big |_{t=0}=u_0(x_3), \quad u_0\in \Sigma .
\end{equation*}
Then, for any $T < \infty$, there exists $C_T>0$ such that 
\begin{equation*}
\sup_{t \in [0,T]} \left \|  \psi^ \eps(t) -e^{-i \mu_0 t/
    \eps^2}u(t) \chi_0 \right \|_{L^2(\R^3)}  \le C_T \eps . 
\end{equation*}
\end{theorem}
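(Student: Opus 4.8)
The plan is to adapt the averaging argument of \cite{BMSW}. First I would remove the fast phase by setting $\phi^\eps(t,x)=e^{i\mu_0t/\eps^2}\psi^\eps(t,x)$, so that \eqref{eq:conf2} becomes
\begin{equation*}
  i\d_t\phi^\eps+\tfrac12\d_{x_3}^2\phi^\eps-\tfrac{1}{\eps^2}(H_2-\mu_0)\phi^\eps-\om_3^2\tfrac{x_3^2}{2}\phi^\eps=\l_1\lvert\phi^\eps\rvert^2\phi^\eps+\l_2{\mathcal K}^\eps\phi^\eps,
\end{equation*}
and compare $\phi^\eps$ with $v^\eps(t,x):=\chi_0(x_1,x_2)u(t,x_3)$; since $(H_2-\mu_0)\chi_0=0$ and $u$ solves \eqref{eq:reduc1D}, $v^\eps$ solves the same equation with right-hand side $\kappa_1\chi_0\lvert u\rvert^2u+\l_2\chi_0(K_1\ast\lvert u\rvert^2)u$. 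Let $\Pi$ denote the orthogonal projection of $L^2(\R^2_{x_1,x_2})$ onto $\C\chi_0$, acting fibrewise in $x_3$, and set $R^\eps=\phi^\eps-v^\eps$ and $e^\eps(t,x_3)=\langle\phi^\eps(t,\cdot,\cdot,x_3),\chi_0\rangle_{L^2(\R^2)}-u(t,x_3)$, so that $\Pi R^\eps=\chi_0 e^\eps$ and $e^\eps|_{t=0}=0$.

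The first step is to obtain bounds \emph{uniform in $\eps$}. In the stable regime $\l_1\ge\cst\l_2\ge0$ every term of the conserved energy $E^\eps$ of \eqref{eq:conf2} is nonnegative (for the nonlinear part this is the computation in the proof of Theorem~\ref{thm:global3DSigma-1}), while the datum $\chi_0u_0$, being concentrated on the ground state of $H_2$, gives $E^\eps(0)=\eps^{-2}\mu_0\lVert u_0\rVert_{L^2}^2+O(1)$. Conservation of $E^\eps$ and of the mass then yield, for all $t\ge0$ and $\eps>0$, both $\lVert\phi^\eps(t)\rVert_\Sigma\le C$ and $\langle(H_2-\mu_0)\phi^\eps(t),\phi^\eps(t)\rangle_{L^2(\R^3)}\le C\eps^2$. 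Since $H_2-\mu_0\ge\gamma\,(1-\Pi)$ on $L^2(\R^3)$ with spectral gap $\gamma=\mu_1-\mu_0>0$, the latter forces $\lVert(1-\Pi)\phi^\eps(t)\rVert_{L^2}\le C\eps$, hence $\lVert(1-\Pi)R^\eps(t)\rVert_{L^2}\le C\eps$ because $(1-\Pi)v^\eps=0$. As $\lVert R^\eps\rVert_{L^2(\R^3)}^2=\lVert e^\eps\rVert_{L^2(\R)}^2+\lVert(1-\Pi)R^\eps\rVert_{L^2}^2$, it therefore suffices to prove $\sup_{[0,T]}\lVert e^\eps(t)\rVert_{L^2(\R)}\le C_T\eps$; note that the uniform bounds also descend to $\lVert e^\eps(t)\rVert_{H^1(\R)}\le C$, so $e^\eps(t)\in L^\infty(\R)$ uniformly.

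Projecting the $R^\eps$-equation with $\langle\cdot,\chi_0\rangle_{L^2(\R^2)}$ annihilates the singular term (by self-adjointness of $H_2$) and leaves a \emph{one-dimensional} Schr\"odinger equation for $e^\eps$ with the $\eps$-independent propagator $e^{-itH_3}$, $H_3=-\tfrac12\d_{x_3}^2+\om_3^2\tfrac{x_3^2}{2}$, so the Strichartz estimates of Lemma~\ref{lem:strichartz} ($d=1$) hold on $[0,T]$ with constants independent of $\eps$. Its source splits as $\mathcal{M}^\eps=\mathcal{M}_1^\eps+\mathcal{M}_2^\eps$, where $\mathcal{M}_1^\eps=\langle\l_1(\lvert\phi^\eps\rvert^2\phi^\eps-\lvert v^\eps\rvert^2v^\eps)+\l_2({\mathcal K}^\eps[\lvert\phi^\eps\rvert^2]\phi^\eps-{\mathcal K}^\eps[\lvert v^\eps\rvert^2]v^\eps),\chi_0\rangle$ is Lipschitz in $R^\eps$, and $\mathcal{M}_2^\eps$ is the consistency error of the ansatz. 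Using the Strichartz scheme of Proposition~\ref{prop:localSigma}, the uniform bounds above, and the nonlocal estimates discussed below, one should bound $\mathcal{M}_1^\eps$, in the space dual to the Strichartz norm used for $e^\eps$ and after the usual splitting of $[0,T]$ into subintervals on which the cubic-type contraction closes, by $C_T(\lVert e^\eps\rVert+\eps)$. For $\mathcal{M}_2^\eps$: its cubic contribution vanishes identically, because $\langle\chi_0^3,\chi_0\rangle_{L^2(\R^2)}=\int_{\R^2}\chi_0^4=\kappa_1/\l_1$ — equivalently, the $O(1)$ part of the residual obtained by inserting $v^\eps$ into the equation lies in the range of $1-\Pi$ and is killed by $\Pi$, which is why no first-order corrector is needed. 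Its dipole contribution is
\begin{equation*}
  \mathcal{M}_2^\eps(t,x_3)=\l_2\,u(t,x_3)\int_{\R^2}\bigl(F(\eps x_1,\eps x_2,x_3)-F(0,0,x_3)\bigr)\,\chi_0(x_1,x_2)^2\,\D x_1\D x_2 ,
\end{equation*}
where $F:=K\ast_{\R^3}(\chi_0^2\lvert u\rvert^2)$ and $F(0,0,x_3)=(K_1\ast\lvert u\rvert^2)(x_3)$ by \eqref{eq:K1}. Here Lemma~\ref{lem:FourierK} is decisive: $\widehat F(\xi)=\widehat K(\xi)\,\widehat{\chi_0^2}(\xi_1,\xi_2)\,\widehat{\lvert u\rvert^2}(\xi_3)$ with $\widehat K\in L^\infty(\R^3)$ and $\widehat{\chi_0^2}$ Schwartz, while $\lvert u\rvert^2\in H^1(\R)$ (as $u\in\Sigma\hookrightarrow H^1\cap L^\infty$ in one dimension), so $F$ and $\nabla_{x_1,x_2}F$ are bounded on $\R^3$; hence $\lvert F(\eps x_1,\eps x_2,x_3)-F(0,0,x_3)\rvert\le C_T\,\eps\,\lvert(x_1,x_2)\rvert$ and $\lVert\mathcal{M}_2^\eps(t)\rVert_{L^2(\R)}\le C_T\eps$ on $[0,T]$.

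Inserting these bounds into the Duhamel formula for $e^\eps$ and using the $\eps$-uniform Strichartz inequalities together with $e^\eps(0)=0$, a Gronwall argument yields $\sup_{[0,T]}\lVert e^\eps(t)\rVert_{L^2(\R)}\le C_T\eps$; combined with $\lVert(1-\Pi)R^\eps\rVert_{L^2}\le C\eps$ and $\lVert\psi^\eps(t)-e^{-i\mu_0t/\eps^2}u(t)\chi_0\rVert_{L^2(\R^3)}=\lVert R^\eps(t)\rVert_{L^2(\R^3)}$ this is the claim. The step I expect to be most delicate is precisely the uniform-in-$\eps$ control of the nonlocal terms, in particular in $\mathcal{M}_1^\eps$: the operator ${\mathcal K}^\eps$ is the highly singular Calder\'on--Zygmund operator ${\mathcal K}$ composed with the degenerating anisotropic dilation $(x_1,x_2)\mapsto(\eps x_1,\eps x_2)$, so its crude $L^p(\R^3)$ bounds are not uniform in $\eps$. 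The point is that in every occurrence ${\mathcal K}^\eps$ is multiplied by, or integrated against, the smooth rapidly decaying weight $\chi_0$; systematically exploiting this, together with the boundedness of the symbol of ${\mathcal K}$ (Lemma~\ref{lem:FourierK}) and the $L^2(\R)$-boundedness of ${\mathcal K}_1$ established in \S\ref{sec:reduction}, one recovers the correct powers of $\eps$.
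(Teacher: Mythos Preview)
Your proposal is correct and follows essentially the same strategy as the paper: both use the stable-regime energy identity to get $\lVert(1-\Pi)\phi^\eps(t)\rVert_{L^2}\le C\eps$ via the spectral gap $\mu_1-\mu_0$, then project onto $\chi_0$ to obtain a one-dimensional Schr\"odinger equation for $e^\eps=\psi_0^\eps-u$ (the paper's $w^\eps$) and close by $d=1$ Strichartz estimates plus Gronwall.

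The only differences are organisational. First, you split the source at $v^\eps=\chi_0 u$ into a Lipschitz part $\mathcal{M}_1^\eps$ and a consistency part $\mathcal{M}_2^\eps$, whereas the paper splits at $\Pi_0\psi^\eps$ into a purely one-dimensional Gronwall term $L^\eps$ and commutator-type sources $S_1^\eps,S_2^\eps$; in practice these two splittings differ only by which intermediate point is inserted, and your $\mathcal{M}_1^\eps$ decomposes into the paper's $L^\eps+S_1^\eps$ (cubic) and $L^\eps+S_2^\eps$ (dipole) once you insert $\Pi\phi^\eps$ --- so the delicate $\mathcal{K}^\eps$-uniformity issue you flag is handled in the paper exactly by that insertion. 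Second, for the dipole consistency error the paper applies Plancherel in $(y_1,y_2)$ and bounds $\lvert e^{-i\eps(x_1\eta_1+x_2\eta_2)}-1\rvert\le\eps\lvert x\rvert\lvert\eta\rvert$ directly in the $L^2_{x_3}$ norm, while you use the equivalent real-space formulation $\nabla_{x_1,x_2}F\in L^\infty$ (which indeed follows from $\widehat K\in L^\infty$, $\widehat{\chi_0^2}\in\Sch$, and $\widehat{\lvert u\rvert^2}\in L^1(\R)$ via $\lvert u\rvert^2\in H^1(\R)$). Both arguments yield the same $O(\eps)$ bound.
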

\begin{proof}[Proof (Sketch)]
The energy associated to \eqref{eq:conf2} can be written as
\begin{equation*}
\begin{split}
  E=& \< \psi^\eps(t), H_1 \psi^\eps(t) \> + \frac{1}{\eps^2}
  \< \psi^\eps(t), H_2 \psi^\eps(t) \> \\ 
  &+ \frac{1}{2(2\pi)^3}  \int_{\R^3}\(\l_1 +\frac{4\pi}{3} \l_2
\(3\frac{\xi_3^2}{\lvert 
  \xi\rvert^2}-1\) \)\lvert\widehat
\rho(\xi)\rvert^2\D \xi,
\end{split}
\end{equation*}
where  $\langle \cdot, \cdot \rangle$ denotes the scalar product in
$L^2(\R^3)$, and 
$H_1$, $H_2$  denote the one-dimensional and
two-dimensional harmonic oscillator operator, respectively.  
In the following we denote
by $\Pi_\ell$ the orthogonal projector onto the eigenspace
corresponding to $\chi_\ell(x_3)$ and define  
\begin{equation*}
\psi_\ell^\eps(t,x_3) =  e^{i \mu_0 t / \eps^2} \int_{\R^2}
\psi^\eps (t,x_1,x_2,x_3) \chi_\ell (x_1,x_2) dx_1dx_2.  
\end{equation*}
Recalling the ansatz \eqref{eq:ansatz} for $\psi^\eps(t)$ and using
the fact that the  
eigenfunctions $\{ \chi_\ell \}_{\ell \in \N}$ form an orthonormal
basis of $L^2(\R^2)$ we note that  
$\chi_{0\mid t=0}^\eps = u_{\mid t=0}$, since $\| \chi_0 \|_{L^2}
=1$. Thus, we can rewrite 
\begin{equation*}
 \left < \psi^\eps(t), H_2 \psi^\eps(t) \right > = \sum_{\ell
 =1}^\infty (\mu_\ell - \mu_0) \| \psi_\ell^\eps(t) \|_{L^2}^2 +   
 \mu_0 \| u(0) \|_{L^2}^2.
\end{equation*}
Keeping in mind that, by assumption $\left < \psi^\eps(0), H_2
  \psi^\eps(0) \right > =  \mu_0 \| u(0) \|_{L^2(\R)}^2$, the
conservation of energy implies 
\begin{equation*}
\begin{split}
 E_{\rm red}=& \left < \psi^\eps(t), H_1 \psi^\eps(t) \right > +
  \frac{1}{\eps^2} \sum_{\ell =1}^\infty (\mu_\ell - \mu_0) \|
  \psi_\ell^\eps (t) \|_{L^2}^2 \\ 
  &+ \frac{1}{2(2\pi)^3}  \int_{\R^3}\(\l_1 +\frac{4\pi}{3} \l_2
\(3\frac{\xi_3^2}{\lvert 
  \xi\rvert^2}-1\) \)\lvert\widehat
\rho(\xi)\rvert^2\D \xi,
\end{split}
\end{equation*}
where the reduced initial energy is
\begin{equation*}
\begin{split}
  E_{\rm red}= & \, \left < \psi^\eps(0), H_1 \psi^\eps(0) \right > +
  \frac{1}{2(2\pi)^3}  \int_{\R^3}\(\l_1 +\frac{4\pi}{3} \l_2 
\(3\frac{\xi_3^2}{\lvert   \xi\rvert^2}-1\) \)\lvert\widehat
  \rho(\xi)\rvert^2\D \xi \\ 
\ge & \, \left < \psi^\eps(0), H_1 \psi^\eps(0) \right > +
  \frac{1}{2(2\pi)^3} \int_{\R^3} \(\l_1 -\frac{4\pi}{3} \l_2 
    \)\lvert     \widehat \rho(\xi)\rvert^2 \D \xi\ge 0, \\
    \end{split}
\end{equation*}
by assumption on the parameters $\l_1, \l_2$. On the other hand, by
exactly the same argument we infer from the energy conservation above
that  
\begin{align*}
  E_{\rm red} \ge  & \, \left < \psi^\eps(t), H_1 \psi^\eps(t) \right
  > + \frac{1}{\eps^2} \sum_{\ell =1}^\infty (\mu_\ell - \mu_0) \|
  \psi_\ell^\eps(t) \|_{L^2}^2 \\ 
  & \, + \frac{1}{2(2\pi)^3} \int_{\R^3} \(\l_1 -\frac{4\pi}{3} \l_2
    \)   \lvert \widehat \rho(\xi)\rvert^2 \D \xi.
\end{align*}
All the terms in this equation are non-negative and thus, we
immediately obtain 
\begin{equation*}
\sum_{\ell =1}^\infty (\mu_\ell - \mu_0) \| \psi_\ell^\eps(t) \|_{L^2}^2
\le C \eps^2. 
\end{equation*}
Denoting by $\Pi_0$ the projection onto the eigenspace generated by
the ground state $\chi_0$, we can thus estimate, since $\psi_0^\eps
\chi_0 = e^{i\mu_0t/\eps^2}\Pi_0 \psi^\eps$, 
\begin{align*}
& \left \| \psi^ \eps(t) -  e^{-i \mu_0 t/ \eps^2}  \chi_0 u(t)
 \right \|_{L^2(\R^3)} \\ 
 & \, \le \left \| (1- \Pi_0)  \psi^ \eps(t) \right \|_{L^2(\R^3)}   +
 \left \|  e^{-i \mu_0 t/ \eps^2}\chi_0 (\psi_0^\eps(t) -u(t))
 \right \|_{L^2(\R^3)}  \\ 
 & \, = \left \| (1- \Pi_0)  \psi^ \eps(t) \right \|_{L^2(\R^3)}   + 
\left \|  \psi_0^\eps(t) -u(t)\right \|_{L^2(\R)}.
\end{align*}
For the first term on the right hand side, we have
\begin{equation*}
 \left \| (1- \Pi_0)  \psi^ \eps(t) \right \|_{L^2(\R^3)}^2
 =\sum_{\ell =1}^\infty \left\| \psi_\ell^\eps(t)\right\|_{L^2}^2 
 \le \frac{1}{\mu_1 - \mu_0} \sum_{\ell =1}^\infty (\mu_\ell - \mu_0)
 \| \psi_\ell^\eps(t) \|_{L^2}^2 \le C \eps^2. 
 \end{equation*}
Applying one derivative with respect to $x_1$ or $x_2$, Sobolev
embedding shows that the above estimate remains valid if the
$L^2(\R^3)$-norm is replaced by $L^2_{x_3}\(L^p_{x_1,x_2}\)$, for
$2\le p<\infty$. 

For the term $\psi_0^\eps-u$, more computations are necessary, and we shall
emphasize the main arguments only. First, as in \cite{BMSW}, we check that
$\psi_0^\eps$ solves
\begin{equation*}
  i\d_t \psi_0^\eps = H_1 \psi_0^\eps + F_1^\eps,
\end{equation*}
where 
\begin{equation*}
  F_1^\eps(t,x_3) = e^{i\mu_0 t/\eps^2}\int_{\R^2} \(\l_1 \lvert
  \psi^\eps\rvert^2 \psi^\eps +\l_2 {\mathcal K}^\eps\psi^\eps\)
  \chi_0(x_1,x_2)dx_1dx_2. 
\end{equation*}
Recall that $u$ solves
\begin{equation*}
  i\d_t u = H_1 u + \l_1 \lvert u\rvert^2 u \int_{\R^2}\chi_0^4
  (x_1,x_2)dx_1dx_2 +\l_2 \(K_1\ast \lvert u\rvert^2\) u, 
\end{equation*}
where $K_1$ is given by \eqref{eq:K1}. Introducing the error $w^\eps =
\psi_0^\eps -u$, we can write
\begin{equation*}
  i\d_t w^\eps = H_1 w^\eps + L^\eps + \l_1 S_1^\eps+\l_2 S_2^\eps,
\end{equation*}
where
\begin{align*}
  L^\eps&= \l_1 \(\lvert \psi_0^\eps\rvert^2 \psi_0^\eps - \lvert
  u\rvert^2 u\)\int_{\R^2}\chi_0^4
   +\l_2 \(  \(K_1\ast \lvert \psi_0^\eps\rvert^2\)
  \chi_0^\eps-\(K_1\ast \lvert u\rvert^2\) u\), \\ 
S_1^\eps &= e^{i\mu_0 t/\eps^2}\(\Pi_0\(\lvert \psi^\eps\rvert^2
  \psi^\eps\) - \lvert \Pi_0\psi^\eps\rvert^2
  \Pi_0\psi^\eps\),\\
S_2^\eps&= e^{i\mu_0 t/\eps^2}\Pi_0\(   {\mathcal K}^\eps\psi^\eps
  \) - \(K_1\ast \lvert
  \psi_0^\eps\rvert^2\)\psi_0^\eps  .
\end{align*}
The end of the proof relies essentially on the same estimates as in the
contraction part of the  proof of Proposition~\ref{prop:localSigma} in
the case $d=1$,
based on Strichartz estimates; we refer to  \cite{BMSW} for
adaptations due to the fact that the variables $x_1$ and $x_2$ appear
as parameters here. Roughly speaking, the term $L^\eps$ is
treated like a linear term in a Gronwall lemma, even though this not so clear unless one goes 
through the details of computation (which we shall not do):  for tiny
time intervals, the term corresponding to $L^\eps$ is ``absorbed by
the left hand side''. Since we consider only finite time intervals, we
proceed this way a finite number of times, so the size of $w^\eps$ is
dictated by the size of the source terms $S_1^\eps$ and $S_2^\eps$ in
suitable mixed time-space norms. 

In view of the above remark on the
estimate for $(1-\Pi_0)\psi^\eps$, $S_1^\eps$ is $\O(\eps)$ in the
spaces we need; this point is exactly the same as in \cite{BMSW}, so we
leave it out. To conclude, we simply consider the case of
$S_2^\eps$. Recall
 \begin{equation*}
{\mathcal K}^\eps = \int_{\R^3}
 \frac{(\eps x_1-y_1)^2+(\eps 
 x_2-y_2)^2-2(x_3-y_3)^2}{\((\eps x_1-y_1)^2+(\eps 
 x_2-y_2)^2+(x_3-y_3)^2\)^{5/2}}   \lvert
 \psi^\eps(t,y)\rvert^2 d y. 
\end{equation*}
Write
\begin{equation*}
  \lvert \psi^\eps(t,y)\rvert^2 =\lvert \psi^\eps(t,y)\rvert^2  -
  \lvert \Pi_0 \psi^\eps(t,y)\rvert^2 + \lvert \Pi_0
  \psi^\eps(t,y)\rvert^2 .   
\end{equation*}
Up to an error of order $\O(\eps)$, we can replace ${\mathcal K}^\eps$
by
\begin{equation*}
{\mathcal K}_1^\eps=\int_{\R^3}
 \frac{(\eps x_1-y_1)^2+(\eps 
 x_2-y_2)^2-2(x_3-y_3)^2}{\((\eps x_1-y_1)^2+(\eps 
 x_2-y_2)^2+(x_3-y_3)^2\)^{5/2}}   \lvert
 \psi_0^\eps(t,y_3)\chi_0(y_1,y_2)\rvert^2 d y  
\end{equation*}
Essentially, the only new term to estimate is
\begin{equation*}
  \Pi_0 {\mathcal K}_1^\eps - K_1\ast \lvert \Pi_0
  \psi^\eps\rvert^2 = 
\Pi_0 {\mathcal K}_1^\eps - K_1\ast \lvert \psi_0^\eps \chi_0\rvert^2. 
\end{equation*}
Up to an extra ``Gronwall term'', we can replace this by
$\Pi_0 \widetilde{\mathcal K}_1^\eps - K_1\ast \lvert u
\chi_0\rvert^2$, where
 \begin{equation*}
\widetilde {\mathcal K}_1^\eps=\int_{\R^3}
 \frac{(\eps x_1-y_1)^2+(\eps 
 x_2-y_2)^2-2(x_3-y_3)^2}{\((\eps x_1-y_1)^2+(\eps 
 x_2-y_2)^2+(x_3-y_3)^2\)^{5/2}}   \lvert
 u(t,y_3)\chi_0(y_1,y_2)\rvert^2 d y  
\end{equation*}
The difference $\Pi_0 \widetilde{\mathcal K}_1^\eps - K_1\ast \lvert u
\varphi\rvert^2=:I^\eps(t,x_3)$ reads
\begin{equation*}
 I^\eps= \int_{\R^5} \delta^\eps (x_1,x_2,x_3,y_1,y_2,y_3)
  |u(t,y_3)|^2\chi_0(y_1,y_2)^2\chi_0(x_1,x_2)^2dy_1dy_2dy_3dx_1
  dx_2,  
\end{equation*}
where
\begin{align*}
 \delta^\eps &= K\(y_1-\eps x_1,y_2-\eps x_2,y_3-x_3\) -K\(y_1,y_2,y_3-x_3\)\\
 &= \frac{(\eps x_1-y_1)^2+(\eps 
 x_2-y_2)^2-2(x_3-y_3)^2}{\((\eps x_1-y_1)^2+(\eps 
 x_2-y_2)^2+(x_3-y_3)^2\)^{5/2}} - \frac{y_1^2+y_2^2-2(x_3-y_3)^2 
  }{\(y_1^2+y_2^2+(x_3-y_3)^2\)^{5/2}}.  
\end{align*}
Plancherel's formula with respect to the variables $y_1$
and $y_2$ yields:
\begin{align*}
  I^\eps(t,x_3) = \frac{1}{(2\pi)^2}
\int_{\R^5}& \(e^{-i\eps(x_1 \eta_1+x_2\eta_2)}-1\)\F_2
  K(\eta_1,\eta_2,y_3-x_3)
  \overline{\F_2\(\chi_0^2\)}(\eta_1,\eta_2) \\
&\times|u(t,y_3)|^2\chi_0(x_1,x_2)^2
  d\eta_1d\eta_2 dy_3dx_1dx_2.
\end{align*}
Now computing the $L^2$ norm of $I^\eps$ (as a function of $x_3$), the
boundedness of the three-dimensional Fourier transform of $K$ and the
boundedness of $u$ in $L^2(\R)$ yield
\begin{equation*}
  \| I^\eps(t)\|_{L^2}^2\le C \int_{\R^4} \left\lvert e^{-i\eps(x_1
  \eta_1+x_2\eta_2)}-1\right\rvert^2 \left\lvert
  \F_2\(\chi_0^2\)(\eta_1,\eta_2)\chi_0(x_1,x_2)^2\right\rvert^2  
  d\eta_1d\eta_2dx_1dx_2. 
\end{equation*}
Recalling that $\chi_0\in \Sch(\R^2)$, the standard estimate
$|e^{i\theta}-1|\le |\theta|$ for $\theta\in \R$ shows that the $L^2$
norm of $I^\eps$ is $\O(\eps)$, uniformly for $t\ge 0$. Therefore, its
$L^1([0,T];L^2)$ norm (the norm that appears in energy estimates,
which are a particular case of Strichartz estimates) 
is of order $\O(\eps T)$, and the proposition follows. 
\end{proof}

\bibliographystyle{amsplain}
\bibliography{biblio}

\end{document}